\definecolor{coolblack}{rgb}{0.0, 0.18, 0.39}
\definecolor{midnightblue}{rgb}{0.1, 0.1, 0.44}
\definecolor{prussianblue}{rgb}{0.0, 0.19, 0.33}
\definecolor{oxfordblue}{rgb}{0.0, 0.13, 0.28}
\newcommand{\bX}{\ensuremath{\bm{X}}}
\newcommand{\IZ}{\ensuremath{\mathbb{Z}}}
\newcommand{\IN}{\ensuremath{\mathbb{N}}}
\newcommand{\bPhi}{\ensuremath{\boldsymbol\Phi}}
\newcommand{\bPsi}{\ensuremath{\boldsymbol\Psi}}
\newcommand{\btheta}{\ensuremath{\boldsymbol\theta}}
\newcommand{\bepsilon}{\ensuremath{\boldsymbol\epsilon}}
\newcommand{\bbeta}{\ensuremath{\boldsymbol\eta}}
\newcommand{\bSigma}{\ensuremath{\boldsymbol\Sigma}}
\newcommand{\mC}{\ensuremath{\mathcal{C}}}
\newtheorem{prop}{Proposition}
\newenvironment{prop*}
  {\ex}
  {\endex}
\newtheorem{remark}{Remark}
\newenvironment{remark*}
  {\ex}
  {\endex}
\newenvironment{definition*}
  {\ex}
  {\endex}
\author{Jozef \textsc{Barun\'{i}k}$^{\ddag}$ and Michael \textsc{Ellington}$^{\dag}$}
\title{Persistence in Financial Connectedness and Systemic Risk\thanks{\scriptsize{We thank Oliver Linton, Wolfgang H\"{a}rdle, Melanie Schienle, Catherine Forbes, Lubos Hanus, Luk\'{a}\v{s} V\'{a}cha, Ryland Thomas, Chris Florackis, Costas Milas, Alex Kostakis and Charlie Cai for invaluable discussions and comments. We are grateful to Lubo\v{s} Hanus for help in furnishing and converting estimation codes. We acknowledge insightful comments from numerous seminar presentations, such as: the 2019 and 2020 Society for Economic Measurement Conferences; the Danish National Bank; the 2019 STAT of ML conference; the 13${\text{th}}$ International Conference on Computational and Financial Econometrics; and many more. Jozef Barun\'{i}k gratefully acknowledges support from the Czech Science Foundation under the 19-28231X (EXPRO) project. For estimation of dynamic horizon specific networks, we provide packages \texttt{DynamicNets.jl} in \textsf{JULIA} and \texttt{DynamicNets} in \textsf{MATLAB}. The packages are available at \url{https://github.com/barunik/DynamicNets.jl} and \url{https://github.com/ellington/DynamicNets}.
\noindent \textbf{Disclosure Statement:} Jozef Barun\'{i}k and Michael Ellington have nothing to disclose.}}}
\date{\today}
\begin{document}

\begin{titlepage}
\maketitle 

\begin{abstract}
\noindent This paper characterises dynamic linkages arising from shocks with heterogeneous degrees of persistence. Using frequency domain techniques, we introduce measures that identify smoothly varying links of a transitory and persistent nature. Our approach allows us to test for statistical differences in such dynamic links. We document substantial differences in transitory and persistent linkages among US financial industry volatilities, argue that they track heterogeneously persistent sources of systemic risk, and thus may serve as a useful tool for market participants.
\end{abstract}

\noindent JEL Classifications: C10, C40, C55, C58, G00 \\
\noindent Keywords: Finance, Network connections, Variance Decompositions, Persistence, Spectral Domain.

\begin{small}
\begin{singlespace}
\noindent\rule{10cm}{0.4pt}\\
\ddag \textit{Institute of Economic Studies, Charles University, Opletalova 26, 110 00, and The Czech Academy of Sciences, IITA, Pod Vodárenskou Věží 4, 182 08, Prague, Czech Republic}.\\
\href{mailto: barunik@fsv.cuni.cz}{barunik@fsv.cuni.cz}\\
\dag \textit{University of Liverpool Management School, Chatham Building, Liverpool, L69 7ZH, UK}.
\href{mailto: m.ellington@liverpool.ac.uk}{m.ellington@liverpool.ac.uk}
\end{singlespace}
\end{small}

\end{titlepage}

\newpage

\section{Introduction}

Firms and economic units create connections through a variety of channels \citep[see e.g.][]{richmond2019trade,garvey2015analytical,herskovic2020firm}.\footnote{\cite{richmond2019trade} measures connections through consumption growth. \cite{garvey2015analytical} tracks connections that describe the supply chain, and \cite{herskovic2020firm} studies network connections between firm volatilities.} These connections are dynamic and vary over time with changing states of an economy, as both stable and uncertain periods are associated with different shocks. At the same time, an increasing number of authors argue that economic variables are driven by shocks that influence their future value with heterogeneous levels of persistence \citep{bandi2019spectral,dew2016asset}. Connectedness and the subsequent network structures that emerge from these relationships are central to risk measurement and management, as well as to understanding macroeconomic risks that emerge over business cycles. 

The main objective of this paper is to introduce measures capable of identifying the smoothly varying persistence structure of such linkages. As an interesting and important example, we provide an analysis of US sectors, with a focus on the financial sector, and identify a heterogeneous persistence structure of systemic risk.

There are a variety of ways to measure connectedness.\footnote{Some propose measures derived from correlations or coefficient estimates \citep[see e.g.][]{engle2012dynamic,geraci2018measuring,calabrese2019new}, while others track links between individual companies and broader market/economic movements \citep[see e.g.][]{acharya2012capital,adrian2016covar,acharya2017measuring}
.} In particular, \cite{diebold2014} provides a unifying framework for measuring connectedness and other network properties using variance decompositions from an approximate model. Variance decompositions track how shocks affect the future variation of variables within a system, and are therefore a natural choice for inferring network connectedness from data.\footnote{The subsequent literature has widely adopted the \cite{diebold2014} approach to address a variety of issues in finance and economics \citep[e.g.][]{yang2017quantitative,barunik2020asymmetric}.} However, the time evolution of such measures typically involves the estimation of static models that require covariance stationarity and roll through a time series of data. More importantly, such measures aggregate shocks and mask the persistence structures of a network.

In this paper, we provide the tools to identify dynamic connectedness and other key network measures from variance decompositions. We argue that different levels of connectedness can form around transitory and persistent components of shocks within a system of financial data. To capture the time-varying nature of such transitory and persistent components of connectedness, we further consider time-varying variance decomposition matrices from vector autoregressions (VARs) as dynamic adjacency matrices. To identify the persistence structure of the network, we propose to use localised spectral decompositions\footnote{Note that frequency domain techniques are useful tools for denoising \citep{sun2012new,haven2012noising} and forecasting \citep{sevi2014forecasting,barunik2016modeling} financial time series.} of variance error forecasts.

The main contribution of our paper is to provide a novel framework for measuring dynamic relationships that relate to different horizons of interest in multivariate time series models. We use a locally stationary Bayesian time-varying parameter VAR model, which is readily available in high-dimensional settings. We also develop a test for differences in connectedness over different horizons and show how to infer differences over time. We provide Monte Carlo evidence that our measures are able to reliably track connections from different data generating processes (DGPs), including those that are non-Gaussian. Finally, we make computationally efficient packages 
\texttt{DynamicNets.jl} in \textsf{JULIA} and \texttt{DynamicNets} in \textsf{MATLAB} 
that allows one to obtain our measures on data the researcher desires.\footnote{The packages are available at \url{https://github.com/barunik/DynamicNets.jl} and \url{https://github.com/ellington/DynamicNets}}

Our approach provides a solution to the problems of using rolling windows \citep[see e.g.][]{demirer2018estimating} that does not suffer from dimensionality issues or inference problems. The Bayesian nature of our framework incorporates prior shrinkage and provides information about estimation uncertainty from the posterior distribution of the connectedness measures. This is in sharp contrast to conventional studies that only provide point estimates and rely on bootstrapping for confidence intervals. Our measures are also readily available for applications with large data systems. This extends the pairwise approach in \cite{geraci2018measuring} to study linkages between firms. 

The linkages that form over different horizons with heterogeneous persistence are important for a number of reasons. First, economic theory suggests that the marginal utility of agents' preferences depends on cyclical components of consumption \citep[see e.g.][]{giglio2015very,bandi2017business} and also on investment horizons in their risk attitudes \citep{dew2016asset}. Such behaviour can be observed, for example, under myopic loss aversion, where an agent's decision depends on the valuation horizon. 

Second, unanticipated shocks or news have the capacity to alter these preferences and can therefore generate transitory and persistent linkages of different strengths. For example, a shock that has an impact at longer horizons may reflect permanent changes in expectations of future price movements. Such a shock may lead to a permanent change in a firm's future dividend payments \citep{balke2002low}. Conversely, a shock that affects shorter horizons may suggest temporary changes in future price movements. For example, suppose that the shock relates only to a change in an upcoming dividend payment. This would likely result in a very short term change, reflecting the transitory nature of the news. 

Third, firms have different short-run and long-run objectives, and investors view short-run and long-run risks differently \citep[see e.g.][]{drechsler2011s,gerrard2022optimal}. This behaviour motivates the long-run risk asset pricing literature pioneered by \cite{bansal2004risks} and \cite{bansal2010long}. The implication here is that investment horizons may be a source of systematic risk that investors demand compensation for \citep[contributions on this topic include e.g.][]{brennan2018capital,chaudhuri2019dynamic}.  

Identifying network structures that form due to idiosyncratic shocks is also relevant because they can determine aggregate fluctuations \citep{acemoglu2012network}. These links between individual or firm-level entities create systemic risks for sectors and the economy as a whole \citep[][]{billio2012econometric,acemoglu2017microeconomic}. Such risks drive changes in uncertainty, which can be key factors in business cycles and financial distress \citep{bloom2018really}. \cite{gabaix2011granular} shows that sectoral co-movements are salient features of business cycles. Meanwhile, \cite{atalay2017important} finds that the lion's share of variation in output growth is due to idiosyncratic industry-level shocks.

However, an understanding of the potential longevity of risks arising from these linkages and their importance in driving financial turmoil or business cycles is incomplete. We show how our approach provides measures of transitory and persistent linkages using the daily realised firm-level volatilities of S\&P500 constituents. We classify the constituents into their eleven main sectors according to the Global Industry Classification Standard (GICS) and measure network connections from the transitory and persistent components of volatility shocks. 

Our empirical results document substantial heterogeneities in transient and persistent measures of connectedness that reveal the nature of systemic risks arising from networks. Specifically, we document: i) spikes in persistent network connectedness when long-lasting financial and economic events occur; and ii) statistically significant differences between transitory and persistent network connectedness across sectors. 
Our network measures can serve as an online monitoring tool for sectoral uncertainty in markets of interest to macroprudential supervisors and investors alike.


The rest of the paper proceeds as follows. Section \ref{empirical_measures} derives our measures from locally stationary processes, discusses estimation, and proposes a test procedure for statistical differences in transitory and persistent connectedness. Section \ref{Simulations} provides Monte Carlo evidence that our measures are able to reliably track linkages and correctly identify statistical differences. In Section \ref{sec:empirical}, we examine the links between firm-level volatilities of S\&P500 sector constituents and assess the information content of sector connectedness measures beyond leading measures of uncertainty. Finally, section \ref{conclusion} concludes.

\section{Measuring Transitory and Persistent Connections}\label{empirical_measures}

Here we show how one can measure connectedness using the time-varying spectral decompositions. Our measures of connectedness are based on locally stationary processes. This assumes that the process is approximately stationary over a short time interval, which allows us to incorporate time variation into our analysis. This in turn allows us to construct our measures of frequency-dependent, time-varying network connectedness. 

Formally, consider a doubly indexed $N$-variate time series $(\bX_{t,T})_{1\le t \le T,T \in \IN}$ with components $\bX_{t,T}=(\bX_{t,T}^1,\ldots,\bX_{t,T}^N)^{\top}$ that describe all variables in an economy. Here $t$ refers to a discrete time index and $T$ is an additional index indicating the sharpness of the local approximation of the time series $(\bX_{t,T})_{1\le t \le T,T \in \IN}$ by a stationary one. Coarsely speaking, we can consider $(\bX_{t,T})_{1\le t \le T,T \in \IN}$ to be a weakly locally stationary process if, for a large $T$, given a set $S_T$ of sample indices such that $t/T\approx u$ over $t\in S_T$, the sample $(\bX_{t,T})_{t \in S_T}$ approximates the sample of a weakly stationary time series depending on the rescaled location $u$. Note that $u$ is a continuous time parameter referred to as the rescaled time index, and $T$ is interpreted as the number of available observation, hence $1\le t \le T$ and $u\in[0,1]$, see \cite{dahlhaus1996kullback} for further details.

We assume that the economy follows a locally stationary TVP-VAR of lag order $p$ as
\begin{equation}\label{eq:VAR1}
\bX_{t,T}=\bPhi_{1}(t/T)\bX_{t-1,T}+\ldots+\bPhi_{p}(t/T)\bX_{t-p,T} + \bepsilon_{t,T},
\end{equation}
where $\bepsilon_{t,T}=\bSigma^{-1/2}(t/T)\bbeta_{t,T}$ with $\bbeta_{t,T}\sim NID(0,\boldsymbol{I}_M)$ and $\bPhi(t/T)=(\bPhi_{1}(t/T),\ldots,\bPhi_{p}(t/T))^{\top}$ are the time varying autoregressive coefficients. In a neighborhood of a fixed time point $u_0=t_0/T$, we approximate the process $\bX_{t,T}$ by a stationary process $\widetilde{\bX}_t(u_0)$ as
\begin{equation}
\widetilde{\bX}_t(u_0)=\bPhi_1(u_0)\widetilde{\bX}_{t-1}(u_0)+\ldots+\bPhi_p(u_0)\widetilde{\bX}_{t-p}(u_0) + \bepsilon_t,
\end{equation}
with $t\in \IZ$ and under suitable regularity conditions $|\bX_{t,T} - \widetilde{\bX}_t(u_0)| = O_p\big( |t/T-u_0|+1/T\big)$ which justifies the notation ``locally stationary process.'' Crucially, the process has time varying VMA($\infty$) representation \citep{dahlhaus2009empirical,roueff2016prediction}
\begin{equation}
\bX_{t,T} = \sum_{h=-\infty}^{\infty} \bPsi_{t,T}(h)\bepsilon_{t-h}
\end{equation}
where $\bPsi_{t,T}(h) \approx\bPsi(t/T,h)$ is a stochastic process satisfying $\sup_{\ell} ||\bPsi_t-\bPsi_{\ell}||^2 = O_p(h/t)$ for $1\le h \le t$ as $t\rightarrow \infty$. Specifically,   $\bPsi_{t,T}(h)=\left[\bPhi_{t,T}(h)\right]^{-1}$, which is key to understanding dynamics. Since $\bPsi_{t,T}(h)$ contains an infinite number of lags, we approximate the moving average coefficients at $h=1,\ldots,H$ horizons (see the detailed discussion below). The network characteristics rely on variance decompositions, which are transformations of the impulse response functions, $\bPsi_{t,T}(h)$, and permit the measurement of the contribution of shocks to the system.

Since a shock to a variable in the model does not necessarily appear alone, i.e. orthogonally to shocks to other variables, an identification scheme is crucial in calculating variance decompositions. We adapt the generalized identification scheme in \cite{pesaran1998generalized} to locally stationary processes. A natural way to disentangle connections that form over transitory and persistent components of shocks is to consider a spectral representation of the approximating model.\footnote{\cite{barunik2018measuring} disentangle long-run and short-run unconditional network connections using standard VAR models.} Hence instead of impulse responses, we propose to use the (local) frequency response of a shock. The building block of our measures consider a time-varying frequency response function $\bPsi_{t,T}(e^{-i\omega}) = \sum_h e^{-i\omega h} \bPsi_{t,T}(h)$ which we obtain from a Fourier transform of the coefficients with $i=\sqrt{-1}$.

Before introducing our network measures, we define the time varying spectral density of $\bX_{t,T}$ at frequency $\omega$ which is locally the same as the spectral density of $\widetilde{\bX}_t(u)$ at $u=t/T$ as a Fourier transform of VMA($\infty$) filtered series as
\begin{equation}
\boldsymbol S_{\bX}(u,\omega) = \sum_{h=-\infty}^{\infty}\mathbb{E}\Big[ \widetilde{\bX}_{t+h}(u)\widetilde{\bX}_{t}^{\top}(u)  \Big]e^{-i\omega h}=\Big\{\bPsi(u,e^{-i\omega})\Big\}\bSigma(u)\Big\{\bPsi(u,e^{+i\omega})\Big\}^{\top}.
\end{equation}

The time-varying spectral density is a key quantity for understanding frequency dynamics. It describes the distribution of the time varying covariance of $\bX_{t,T}$ over frequency components $\omega$. The local spectral density captures the influence of the time-varying parameters through the impulse transfer functions $\bPsi(u,e^{-i\omega}),\: \bPsi(u,e^{+i\omega}) $ above. Using the spectral representation for the local covariance that is associated with the local spectral density,
\begin{equation}
\mathbb{E}\left[ \widetilde{\bX}_{t+h}(u)\widetilde{\bX}_{t}^{\top}(u)\right] = \int_{-\pi}^{\pi} \boldsymbol S_{\bX}(u,\omega) e^{i\omega h} d \omega
\end{equation}
we can naturally introduce time-varying frequency domain counterparts of variance decompositions. This is important since, as \cite{diebold2014} note, we can view the variance decomposition matrix as an adjacency matrix forming asymmetric connections among a system of variables. In our case, this allows us to define dynamic adjacency matrices with different degrees of persistence. 

\subsection{A Route Towards Transitory and Persistent Connectedness}

Variance decompositions are transformations of impulse responses $\bPsi_{t,T}(h)$ that allow us to measure the contribution of shocks to the system and thus to characterise the networks that form in response to shocks. While it may seem natural to choose a forecast horizon $h=1,\ldots,H$ of interest, this choice is costly in terms of information aggregation and hence loss. In contrast to the cumulative information with increasing $h$, the spectral representation of the impulse responses $\bPsi_{t,T}(e^{-i\omega})$ contains much richer and more precise information. Switching to the frequency domain allows one to trace network connections arising from transitory and persistent components of shocks.

To illustrate, consider a simple bivariate system in which the links between two variables are of interest. Specifically, we are interested in how variable $b$ responds to two different shocks to variable $a$.\footnote{In this example, we assume that the impact of own shocks on variable $b$ increases by 1 unit and is persistent. This means that the impact of a 1 unit shock to variable $b$ affects the value of $b$ up to 20 horizons after we observe the shock. Note that our analysis in Figure \ref{fig:FEVD} remains the same if the own shocks are transitory.} The first is a \textit{transitory shock} that causes the variable $b$ to rise by one unit in period 1 and fall by one unit in period 2, before returning to zero from period onwards. Here we expect the proportion of error variation to be large at short horizons and small at long horizons due to the purely transitory effect. 

The second is a \textit{persistent shock} that results in a unit increase in the variable $b$ and a gradual decrease to zero over the impulse horizon. In this case, the fraction of error variation is expected to be large at longer horizons and smaller at shorter horizons. How these quantities in the time and frequency domain capture the responses to these two shocks is the main motivation for our measures.

\begin{figure}[!ht]
\centering
\scalebox{0.9}{\includegraphics{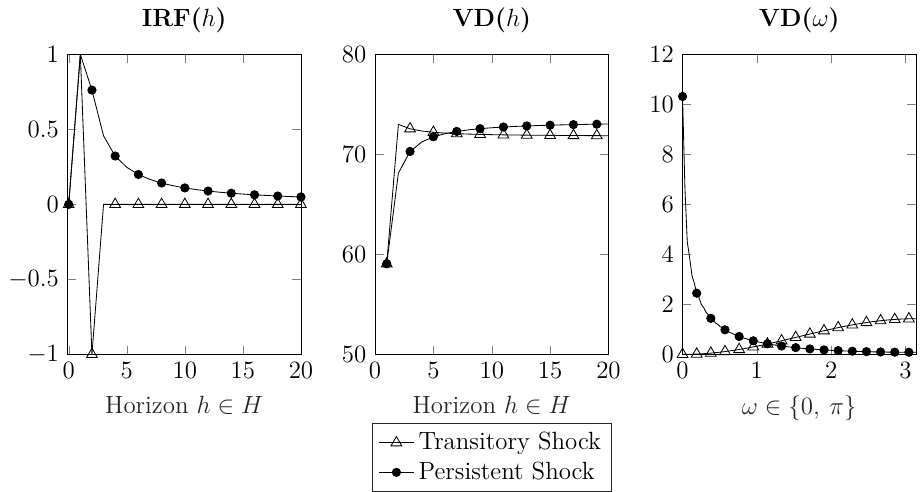}}
    \caption{\textbf{Impulse Response Functions and Variance Decompositions in time and frequency domain} \\ \small{Notes: This figure reports impulse response functions (left), variance decompositions in time domain (middle) and the corresponding spectral representation of the forecast error variance decompositions (right), of a variable $b$ within a bivariate system with respect to a shock in the variable $a$. We consider two types of shocks, a transitory shock (triangles) and a persistent shock (dots) impacting horizon $h=1,\ldots,H$ and frequency $\omega \in \{ 0,\ldots, \omega \}$.}}
      \label{fig:FEVD}
\end{figure}

Figure \ref{fig:FEVD} shows the impulse response functions of variable $b$ to shocks to variable $a$ in the left panel, the corresponding fraction of forecast error variances in the middle panel, and their corresponding spectral decompositions of forecast error variance fractions in the right panel. Note that one can hardly identify persistence of the shock from the almost indistinguishable forecast error variance shares (variance decompositions) in the time domain depicted by the middle panel of figure \ref{fig:FEVD}. A persistent shock results in a slightly larger value of forecast error variation relative to the variation due to a transitory shock. At the same time, if we were to estimate these quantities from the data and take into account the uncertainty of the estimates (the figure plots the theoretical values of a simple example), they would become statistically indistinguishable.

In contrast, the spectral representation of the forecast error variance shares in the right-hand panel of Figure \ref{fig:FEVD} accurately captures the heterogeneous impact of the two shocks across frequencies. The transitory shock in the variable $a$ has a negligible impact at low frequencies (i.e. close to $\omega=0$), indicating that this shock has no importance for the long-run variation of the variable $b$, and larger weights at higher frequencies, showing the transitory nature of the link established by this shock. Conversely, the persistent shock affects low frequencies and correctly identifies a persistent link between the variables.

The main implication is that we can construct network measures that take into account the nature of the shocks that form such links. Thus, using spectral decompositions, we are able to identify transitory and persistent network links that are not apparent in the time domain.

\subsection{Measuring Connectedness}

The following proposition establishes the time-varying spectral representation of the variance decomposition of shocks from variable $j$ to variable $k$. This is central to the existence of network measures in the time-frequency domain.\footnote{Note to notation: $[\boldsymbol A]_{j,k}$ denotes the $j$th row and $k$th column of matrix $\boldsymbol A$ denoted in bold. $[\boldsymbol A]_{j,\cdot}$ denotes the full $j$th row; $[\boldsymbol A]_{\cdot,j}$ denotes the full $j$th column. A $\sum A$, where $A$ is a matrix that denotes the sum of all elements of the matrix $A$.} 

\begin{prop}[Dynamic Adjacency Matrix]
	\label{prop:3}
	Suppose $\mathbf{\bX_{t,T}}$ is a weakly locally stationary process with $\sigma_{kk}^{-1} \displaystyle\sum_{h = 0}^{\infty}\left|\Big[\bPsi(u,h) \bSigma(u) \Big]_{j,k}\right| < +\infty, \forall j,k.$ Then the \textbf{time-frequency variance decompositions} of the $j$th variable at a rescaled time $u=t_0/T$ due to shocks in the $k$th variable on the frequency band $d = (a,b): a,b \in (-\pi, \pi), a < b$ form a \textbf{dynamic adjacency matrix} defined as
	\begin{equation}
	\Big[ \btheta(u,d) \Big]_{j,k} = \frac{\sigma_{kk}^{-1} \displaystyle \int_{a}^{b} \left| \bigg[ \bPsi(u,e^{-i \omega}) \bSigma(u) \bigg]_{j,k} \right|^2 d \omega}{ \displaystyle \int_{-\pi}^{\pi} \Bigg[ \Big\{\bPsi(u,e^{-i \omega}) \Big\}\bSigma(u) \Big\{ \bPsi(u,e^{+i \omega }) \Big\}^{\top}  \Bigg]_{j,j} d \omega}\end{equation}
	where $\bPsi(u,e^{-i\omega}) = \sum_h e^{-i\omega h} \bPsi(u,h)$ is local impulse transfer function or frequency response function computed as the Fourier transform of the local impulse response $\bPsi(u,h)$
\end{prop}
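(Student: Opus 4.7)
The plan is to derive the stated formula by starting from the time-domain generalized forecast error variance decomposition of Pesaran--Shin, formulated at the rescaled time $u$ via the locally stationary approximation $\widetilde{\bX}_t(u)$, and then transporting it to the frequency domain through Parseval's identity. The summability hypothesis is exactly what is needed to make this transport rigorous.

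First, I would fix $u=t_0/T$ and work with the stationary approximation $\widetilde{\bX}_t(u)$, which by the discussion preceding the proposition has VMA($\infty$) representation $\widetilde{\bX}_t(u)=\sum_{h\ge 0}\bPsi(u,h)\bepsilon_{t-h}$ and approximates $\bX_{t,T}$ up to an $O_p(|t/T-u_0|+1/T)$ error. The classical generalized variance share (Pesaran--Shin) for this stationary process is
\begin{equation*}
\frac{\sigma_{kk}^{-1}\sum_{h\ge 0}\bigl([\bPsi(u,h)\bSigma(u)]_{j,k}\bigr)^2}{\sum_{h\ge 0}[\bPsi(u,h)\bSigma(u)\bPsi(u,h)^{\top}]_{j,j}},
\end{equation*}
the numerator being the contribution to the $j$th forecast-error variance from the shock normalized through the $k$th equation and the denominator being the total $j$th error variance. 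The absolute-summability hypothesis guarantees that the $(j,k)$-sequence $\{[\bPsi(u,h)\bSigma(u)]_{j,k}\}_{h\ge 0}$ lies in $\ell^1\subset\ell^2$, so both sums converge and the object is well-defined.

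Second, I would invoke Parseval/Plancherel. Because the coefficient sequence is in $\ell^2$, the frequency response $\bPsi(u)e^{-i\omega}=\sum_h e^{-i\omega h}\bPsi(u,h)$ converges in $L^2(-\pi,\pi)$, and entrywise
\begin{equation*}
\sum_{h\ge 0}\bigl([\bPsi(u,h)\bSigma(u)]_{j,k}\bigr)^2=\frac{1}{2\pi}\int_{-\pi}^{\pi}\bigl|[\bPsi(u)e^{-i\omega}\bSigma(u)]_{j,k}\bigr|^2\,d\omega.
\end{equation*}
An analogous identity applied to the matrix product gives
\begin{equation*}
\sum_{h\ge 0}[\bPsi(u,h)\bSigma(u)\bPsi(u,h)^{\top}]_{j,j}=\frac{1}{2\pi}\int_{-\pi}^{\pi}\bigl[\{\bPsi(u)e^{-i\omega}\}\bSigma(u)\{\bPsi(u)e^{+i\omega}\}^{\top}\bigr]_{j,j}\,d\omega,
\end{equation*}
which is precisely $\int [\boldsymbol S_{\bX}(u,\omega)]_{j,j}\,d\omega/(2\pi)$ via the local spectral density defined earlier. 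Localizing the numerator integral to the sub-band $d=(a,b)$ then produces the frequency-band variance share; the $2\pi$ factors cancel in the ratio, yielding exactly the displayed formula for $[\btheta(u,d)]_{j,k}$.

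The main subtlety, in my view, is not the Parseval manipulation but the bookkeeping at the interface between $\bX_{t,T}$ and $\widetilde{\bX}_t(u)$: one must confirm that the Pesaran--Shin variance share computed from the stationary approximation coincides, up to the locally stationary approximation error, with the object one would build directly from $\bPsi_{t,T}(h)$ and $\bSigma(t/T)$. This relies on the uniform smoothness of the coefficients $\bPhi_\ell(\cdot)$ assumed in the locally stationary framework together with $\sup_\ell\|\bPsi_t-\bPsi_\ell\|^2=O_p(h/t)$, which makes the approximation transfer termwise and preserves absolute summability. A secondary minor point is to verify that $\ell^1$-summability of $[\bPsi(u,h)\bSigma(u)]_{j,k}$ together with boundedness of $\bSigma(u)$ implies the corresponding $\ell^2$-summability for the product $\bPsi(u,h)\bSigma(u)\bPsi(u,h)^{\top}$ so that the denominator identity is also valid.
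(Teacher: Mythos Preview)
Your proposal is correct and follows essentially the same route as the paper: derive the Pesaran--Shin generalized variance share from the local VMA($\infty$) representation at $u$, then pass to the frequency domain and restrict the numerator to the band $d$. The only cosmetic difference is that you invoke Parseval/Plancherel by name, whereas the paper carries out the same identity explicitly via the orthogonality relation $\frac{1}{2\pi}\int_{-\pi}^{\pi}e^{i\omega(r-v)}\,d\omega=\mathbf{1}\{r=v\}$ and, for the denominator, factors $\bSigma(u)=\bP(u)\bP(u)^{\top}$ before applying the same trick.
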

\begin{proof}
	See Appendix A.
\end{proof}

It is important to note that $\Big[ \btheta(u,d) \Big]_{j,k}$ is a natural disaggregation of traditional variance decompositions to time-varying frequency bands. This is because a portion of the local error variance of the $j$th variable at a given frequency band due to shocks in the $k$th variable is scaled by the variance of the $j$th variable. Note that while the Fourier transform of the impulse response generally takes on complex values, the quantity in proposition (\ref{prop:3}) is the squared modulus of weighted complex numbers, thus producing a real quantity.

This relationship is an identity which means the integral is a linear operator, summing over disjoint intervals covering the entire range $(-\pi, \pi)$ recovers the time domain counterpart of the local variance decomposition for $h\rightarrow \infty$. The following remark formalizes this fact.

\begin{remark}[Aggregation of Adjacency Matrix]
	\label{rem:recomposition}
	Denote by $d_s$ an interval on the real line from the set of intervals $D$ that form a partition of the interval $(-\pi, \pi)$, such that $\cap_{d_s \in D} d_s = \emptyset, $ and $\cup_{d_s \in D} d_s = (-\pi, \pi)$. Due to the linearity of integral and the construction of $d_s$, we have 
	$$
	\Big[ \btheta(u) \Big]_{j,k} = \sum_{d_s \in D} \Big[ \btheta(u,d_s) \Big]_{j,k}.$$
\end{remark}

Remark (\ref{rem:recomposition}) is important as it establishes the aggregation of network connectedness measures across different frequency bands to its time domain, \textit{total} counterpart. Hence one can easily obtain time varying network measures across any horizon of interest using frequency bands that will always sum up to an aggregate time domain counterpart. 

As the rows of the time-frequency network connectedness do not necessarily sum to one, we normalize the element in each by the corresponding row sum

\begin{equation}
\Big[ \widetilde \btheta(u,d) \Big]_{j,k} = \Big[ \btheta(u,d) \Big]_{j,k}\Bigg/ \sum_{k=1}^N\Big[  \btheta(u) \Big]_{j,k}
\end{equation}

Our notion that we can approximate well the process $\bX_{t,T}$, by a stationary process $\widetilde{\bX}_t(u)$ in a neighborhood of a fixed time point $u=t/T$, means that all associated local quantities approximate well their time varying counterparts. Following the arguments in \cite{dahlhaus1996kullback}, and using mild assumptions, one can easily see that local variance decompositions at a frequency band $\widetilde \btheta(u,d)$ approximate well the time-varying variance decompositions of the process $\bX_{t,T}$. 

Note that the local generalized variance decompositions form a dynamic adjacency matrix that defines a time-varying network at a given frequency band. Thus, we can directly use our measures as time-varying network characteristics that contain richer information in comparison to typical network analysis. In our notion, variance decompositions can be viewed as weighted links showing the strengths of connections. In addition, the links are directional, meaning that the $j$ to $k$ link is not necessarily the same as the $k$ to $j$ link, and hence the adjacency matrix is asymmetric. Even more important, the adjacency matrix is time-varying and frequency specific that allows the study of time-varying network characteristics at various frequency bands of the user's choice. The simplest is to measure transitory network connections over the short-run and persistent ones over the long-run.

Now we can define network connectedness measures that characterize a time-varying and frequency specific network. We define local network connectedness measures at a given frequency band as the ratio of the off-diagonal elements to the sum of the entire matrix

\begin{equation}\label{network}
\mC(u,d) = 100\times\displaystyle \sum_{\substack{j,k=1\\ j\ne k}}^N \Big[\widetilde \btheta(u,d)\Big]_{j,k}\Bigg/\displaystyle \sum_{j,k=1}^N \Big[\widetilde \btheta(u)\Big]_{j,k}
\end{equation}
This measures the contribution of forecast error variance attributable to all shocks in the system, minus the contribution of own shocks over frequency band $d$ and infers system-wide connectedness over such frequency band. We can also define measures that reveal when an individual variable in the economy is a transmitter or receiver of shocks. Local directional connectedness measures how much of each variables's $j$ variance is due to shocks in other variables $k\ne j$ in the economy over frequency band $d$ is given by
\begin{equation}\label{from}
\mC_{j\leftarrow\bullet}(u,d) = 100\times\displaystyle \sum_{\substack{k=1\\ k\ne j}}^N \Big[\widetilde \btheta(u,d)\Big]_{j,k}\Bigg/\displaystyle \sum_{j,k=1}^N \Big[\widetilde \btheta(u)\Big]_{j,k},
\end{equation}
defining the so-called \textsc{from} connectedness. One can precisely interpret this quantity as from-degrees (often called out-degrees in the network literature) associated with the nodes of the weighted directed network represented by the variance decompositions matrix generalized to a time-varying frequency specific quantity. Likewise, the contribution of variable $j$ to variances in other variables is computed as 
\begin{equation}\label{to}
\mC_{j\rightarrow \bullet}(u,d) = 100\times\displaystyle \sum_{\substack{k=1\\ k\ne j}}^N \Big[\widetilde \btheta(u,d)\Big]_{k,j}\Bigg/\displaystyle \sum_{k,j=1}^N \Big[\widetilde \btheta(u)\Big]_{k,j}
\end{equation}

and is the so-called \textsc{to} connectedness. Again, one can precisely interpret this as to-degrees (often called in-degrees in the network literature) associated with the nodes of the weighted directed network represented by the variance decompositions matrix. These two measures show how other variables contribute to the variation of variable $j$, and how variable $j$ contributes to the variation of others, respectively, in a time varying fashion at a chosen frequency band. We note here that taking the difference between \textsc{to} connectedness and \textsc{from} connectedness summarizes information regarding directional connections in net-terms. Further, one can track pairwise connections over frequency bands in an analogous manner to the above as differences between the $j-k$th element and $k-j$th elements.

Importantly, the following proposition shows one can always reconstruct time domain network connectedness measures from our frequency-dependent networks.

\begin{prop}[Reconstruction of Dynamic Network Connectedness]
	\label{prop:4}
	Denote by $d_s$ an interval on the real line from the set of intervals $D$ that form a partition of the interval $(-\pi, \pi)$, such that $\cap_{d_s \in D} d_s = \emptyset, $ and $\cup_{d_s \in D} d_s = (-\pi, \pi)$. We then have that 

	\begin{align}
	\begin{split}
	\mC(u) &= \sum_{d_s \in D} \mC(u,d_s) \\
	\mC_{j\leftarrow\bullet}(u) &= \sum_{d_s \in D} \mC_{j\leftarrow\bullet}(u,d_s) \\
	\mC_{j\rightarrow \bullet}(u)  &= \sum_{d_s \in D} \mC_{j\rightarrow \bullet}(u,d_s) 
	\end{split}
	\end{align}
    where $\mathcal{C}(u)$ are local network connectedness measures aggregated over frequencies.
\end{prop}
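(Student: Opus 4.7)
The plan is to reduce all three identities to the linearity of the integral already encoded in Remark \ref{rem:recomposition}, together with the observation that the normalization used to build the measures depends only on the time-domain object $\btheta(u)$ and is therefore $d$-independent. The whole argument essentially reduces to one elementwise lemma for the normalized adjacency matrix, from which the three connectedness identities fall out by pushing a finite sum past a constant denominator.

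First I would establish the key intermediate claim that $\sum_{d_s \in D}\bigl[\widetilde\btheta(u,d_s)\bigr]_{j,k} = \bigl[\widetilde\btheta(u)\bigr]_{j,k}$ for every $(j,k)$. By definition, $\bigl[\widetilde\btheta(u,d)\bigr]_{j,k} = \bigl[\btheta(u,d)\bigr]_{j,k} \big/ \sum_{\ell=1}^N \bigl[\btheta(u)\bigr]_{j,\ell}$, and crucially the denominator is the row sum of the aggregated time-domain $\btheta(u)$ and hence does not depend on the frequency band $d$. Applying Remark \ref{rem:recomposition} to the numerator (which is the content of the integral being additive over the partition $D$) immediately gives the claim, and in particular also the additivity of the row sum $\sum_k \bigl[\widetilde\btheta(u,d)\bigr]_{j,k}$ over $D$.

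Second, I would substitute this elementwise identity into the definitions (\ref{network}), (\ref{from}), (\ref{to}). Each measure is a ratio whose numerator is a sum of entries of $\widetilde\btheta(u,d)$ over a fixed index set (all off-diagonal entries for $\mC$, row $j$ minus the diagonal for $\mC_{j\leftarrow\bullet}$, column $j$ minus the diagonal for $\mC_{j\rightarrow\bullet}$) and whose denominator is the corresponding sum of entries of $\widetilde\btheta(u)$, which is $d$-independent. Summing over $d_s \in D$ therefore only affects the numerator; exchanging the finite sum over $(j,k)$ with $\sum_{d_s \in D}$ and applying the intermediate claim collapses the numerator to the time-domain expression, yielding $\mC(u)$, $\mC_{j\leftarrow\bullet}(u)$ and $\mC_{j\rightarrow\bullet}(u)$ respectively.

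There is no real obstacle here, and I would keep the proof short. The only subtle point worth stressing is that the normalization divides by the row sums of the \emph{time-domain} $\btheta(u)$ rather than the frequency-band-specific $\btheta(u,d)$, and it is precisely this choice that preserves additivity across the partition of $(-\pi,\pi)$; any other normalization (e.g.\ band-by-band row-sum normalization) would destroy the decomposition and make Proposition \ref{prop:4} fail.
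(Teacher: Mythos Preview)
Your proposal is correct and follows essentially the same route as the paper: invoke Remark~\ref{rem:recomposition} for the additivity of $\btheta(u,d)$ over the partition, observe that the normalizing denominator in each of (\ref{network})--(\ref{to}) is the $d$-independent time-domain quantity, and pull the finite sum over $d_s$ through to the numerator. Your intermediate elementwise lemma for $\widetilde\btheta$ and your closing remark on why the time-domain normalization is essential are a bit more explicit than the paper's one-line ``appropriate substitutions,'' but the argument is the same.
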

\begin{proof}
	See Appendix A.
\end{proof}

In light of the above, all local frequency connectedness measures $\mC(u,d)$ for $u=t/T$ approximate well the time-varying frequency connectedness of the process $\bX_{t,T}$.

\subsection{Obtaining Transitory and Persistent Network Measures}\label{get_measures}

In light of the assumptions that underpin our measures, we conjecture that the economy follows a stable time-varying parameter heteroskedastic VAR (TVP-VAR) model as in (\ref{eq:VAR1}). We follow \cite{petrova2019quasi} who establishes a Quasi Bayesian Local-Likelihood approach for inference in the presence of time-varying parameters. 

For consistent estimation under the QBLL approach, let $\bX_{t,T}$ be a time-series we observe with log probability density $l_t\left(\bX_{t,T}|\bX_{t-1,T},\tilde{\mathbf{\Phi}}(t/T)\right), \: \tilde{\mathbf{\Phi}}(t/T)$ stacks the time-varying autoregressive coefficient matrices into a finite-dimensional vector that satisfies one of the following conditions. 

\begin{enumerate}
\item[i)] $\tilde{\mathbf{\Phi}}_t=\tilde{\mathbf{\Phi}}(t/T)$ is a deterministic process  where $\mathbf{\Phi}(.)$ is a piecewise differentiable function. 
\item[ii)] $\tilde{\mathbf{\Phi}}(t/T)$ is a stochastic process satisfying: $\sup_{j:|j-t|\leq h}||\tilde{\mathbf{\Phi}}_t-\tilde{\mathbf{\Phi}}_j ||^2 =O_p(h/t)$ for $1\leq h\leq t,\: t\rightarrow \infty$.
\end{enumerate}

Both of the above indicate that the parameter sequence drafts gradually over time. The first condition is standard of \cite{dahlhaus2000likelihood} for locally stationary processes which requires the parameter process is a piecewise smooth deterministic function; thus allowing for breaks in parameters. The second condition is a generalization of the first to include stochastic parameter processes exhibiting degrees of persistence necessary for consistent estimation of stochastic driven time-variation. Such condition includes bounded random walk processes and some fractionally integrated processes. The parameters may feature any combination of deterministic trends and/or breaks satisfying conditions i) and ii) above. Our data generating processes (DGPs) in Section \ref{sec:Sim} are examples of such DGPs.\footnote{Figure 1 of \cite{petrova2019quasi} provides figures of examples and provides further discussion around conditions the parameter sequences must satisfy for consistent estimation.}

To obtain the time-varying coefficient estimates at a fixed time point $u=t_{0}/T$, $\widehat{\bPhi}_{1}(u),...,\widehat{\bPhi}_{p}(u)$, and the time-varying covariance matrices, $\widehat{\bSigma}(u)$, we follow the QBLL approach of \cite{petrova2019quasi}. Specifically, this approach uses a kernel weighting function that provides larger weights to observations that surround the period whose coefficient and covariance matrices are of interest. Using conjugate priors, the (quasi) posterior distribution of the parameters of the model are analytical. This alleviates the need to use a Markov Chain Monte Carlo (MCMC) simulation algorithm and permits the use of parallel computing. Note also that in using (quasi) Bayesian estimation  methods, we obtain a distribution of parameters that we use to construct network measures that provide confidence bands. Details of the model and estimation algorithm are in Appendix B.\footnote{Unlike traditional TVP VARs time-variation evolves in a non-parametric manner thus making no assumption on the laws of motion within the model. Typically, the model of \cite{primiceri2005time}, and many extensions, assume parameters evolve as random walks or autoregressive processes.} We provide a computationally efficient package \texttt{DynamicNets.jl} in \textsf{JULIA} and \texttt{DynamicNets} in \textsf{MATLAB} that allows one to obtain our measures on data the researcher desires.\footnote{The packages are available at \url{https://github.com/barunik/DynamicNets.jl} and \url{https://github.com/ellington/DynamicNets}}

To estimate the elements of dynamic adjacency matrix, we first need to truncate the infinite VMA($\infty$) representation of the approximating model with a choice of finite horizon $H$. Here we note that in the frequency domain quantities, $H$ serves only as an approximation factor, and it has no interpretation as in the time domain. Hence in the applications we advise setting the $H$ sufficiently high to obtain a better approximation, particularly when lower frequencies are of interest. We obtain horizon specific measures using Fourier transforms and set our truncation horizon $H=100$. Note, we run all results in this paper for $H\in \{50,100,200\}$, they are qualitatively similar and available upon request. 

Next, estimating dynamic network measures requires the user to choose a kernel and its bandwidth. Typically the larger the bandwidth, the smoother time-evolution of our frequency specific network measures. Therefore, prior to tracking dynamic network connections, it is important the user considers the time-series properties of their data. For example if common peaks (troughs) in the time-series occur frequently and are transient, then a shorter bandwidth may be necessary. Conversely, if tracking network connections among data that evolves gradually over time, like interest rates, a larger bandwidth may be more appropriate. In the context of our study, we use a Normal kernel and explore the implication of bandwidth choice for a variety of data generating processes (DGPs) in Section \ref{Simulations}.

It is noteworthy to mention that the choice of a two-sided kernel can come at a cost; especially if one wishes to use network measures for forecasting purposes. In these cases one may wish to: i) estimate the dynamic network recursively throughout time such that the Normal kernel truncates to use only past values at the time $T$ estimate; or ii) choose a one-sided kernel such as those in \cite{hahn2001identification,barigozzi2020time}. In practice, we encourage researchers to experiment with a variety of bandwidths to ensure results are not driven by its selection. We also encourage authors to use reasonable bandwidths given the data for their application. For example, if one was using these measures for forecasting daily stock return volatilities, a researcher might consider combination forecasts using Bayesian Model averaging to trade on. Alternatively one might look to use variance minimizing kernels  in an attempt to reduce uncertainty around the forecast.\footnote{In the context of our empirical application below where we look at daily realized volatilities of stock returns, we use a bandwidth equal to 8. We also estimate the models using bandwidths of 12, 18, and $\sqrt{T}= \sqrt{3278}\approx 57$. Increasing the bandwidth smooths our network connectedness measures because it assigns larger weights to more distant observations. }

We estimate the $j,k$ element of our dynamic adjacency matrix at time $u=t_0/T$ and horizon $d = (a,b): a,b\in(-\pi,\pi)$ and $a<b$ such that it corresponds to the transitory (high frequency band) and persistent (low frequency band) element of the adjacency matrix respectively as:
\begin{equation} \label{eq:net_estimation}
\Big[ \widehat \btheta(u,d) \Big]_{j,k} = \frac{\widehat{\sigma}_{kk}^{-1} \displaystyle \sum_{\omega \in d} \left( 
	\bigg[ \widehat\bPsi(u,\omega) \widehat \bSigma(u) \bigg]_{j,k} \right)^2 }{ \displaystyle \sum_{\omega \in (-\pi,\pi)} \Bigg[ \widehat\bPsi(u,\omega) \widehat \bSigma(u) \widehat\bPsi^{\top}(u,\omega)  \Bigg]_{j,j} },\end{equation}
where $\widehat\bPsi(u,\omega) = \sum_{h=0}^{H-1} \sum_h \widehat \bPsi(u,h) e^{-i\omega h}$ is an estimate of the impulse transfer function from Fourier frequencies $\omega \in \{ aH / 2 \pi,\ldots, bH/2 \pi \}$ of impulse response functions that cover a specific horizon.\footnote{Note that $i=\sqrt{-1}$.} From this, estimates of Equations (\ref{network})--(\ref{to}) directly follow. For example if the application uses daily data, one may define transitory (short-term) as horizons corresponding to 1--5 days and persistent (long-term) as horizons corresponding to horizons greater than 5 days. This would require defining the band as $(a,b)=(2\pi/5,2\pi)$ for the transitory and $(a,b)=(0,2\pi/5)$ for the persistent networks.

\subsection{Testing for Statistical Differences in Connectedness}\label{test}

We now consider how one can determine, from a statistical perspective, differences between connectedness. We discuss in detail here how one can test for differences in connectedness one computes over different frequency bands. 

In a Bayesian setting there are three alternatives for hypothesis testing. The first is the Bayes factor, the second uses posterior credible intervals, and the third follows statistical decision theory. We follow the latter and utilize the work of \cite{li2014new}, \cite{li2015bayesian} and \cite{liu2020posterior}.\footnote{Bayes factors involve comparing the marginal likelihoods of two competing models and extensively appear in the literature \citep[see e.g.][]{koop2010dynamic,chan2020large}. This is not appropriate in our setting because the network connectedness measures come from a manipulation of a sequence of posterior parameters from the same model; we have no alternative model to specify the marginal likelihood. Using posterior credible intervals is possible and something we consider in the spirit of \cite{cogley2010inflation}. In particular we use the joint posterior distribution of network connectedness measures to compute the probability that connectedness across one frequency band is larger than an analogous measure across another frequency band. These results are in Appendix C.} These studies focus on developing test statistics of a point null hypothesis using the posterior distribution of parameters from a Bayesian model. The approach requires only the posterior distribution of parameters and has various advantages. First, they overcome the problem of the Jeffreys-Lindely paradox. Second, are not sensitive to the prior and are pivotal quantities. Third, they are easy to compute. Crucially, these statistics directly come from quadratic loss functions, as with classical test statistics, and therefore possess the same distributions as their frequentist counterparts. 

\cite{li2015bayesian} develop a Bayesian Lagrange-Multiplier (LM) type test that is asymptotically equivalent to a classical LM test. Using similar assumptions, \cite{liu2020posterior} develop a Bayesian Wald type test that is asymptotically equivalent to a classical Wald test and requires only the posterior mean and posterior variance of parameters under the null hypothesis.\footnote{They also show asymptotic equivalence between their Wald-type test the LM type test in \cite{li2015bayesian}.} Noting that the VMA($\infty$) representation of the VAR are nothing more than a transformation of the VAR parameters, we follow the assumptions in \cite{liu2020posterior} and therefore establish a Bayesian Wald type test for differences between network connectedness across frequency bands. We emphasize our network connectedness measures are manipulations of the VAR parameters themselves and possess a posterior distribution.\footnote{\cite{lutkepohl1990asymptotic} provide the asymptotic distribution for impulse response functions from conventional VAR models one estimates using OLS. \cite{petrova2019quasi} } 

We test the null hypothesis, H$_{0}:\: \mC(u,d_a)=\mC(u,d_b)$ that network connectedness across frequency band $d_a$ and $d_b$ are equivalent against the alternative, H$_{1}:\: \mC(u,d_a)\neq\mC(u,d_b)$. This is equivalent to testing H$_{0}:\: \mC(u,d_a)-\mC(u,d_b)=0$ against H$_{1}:\: \mC(u,d_a)-\mC(u,d_b)\neq0$. The following proposition establishes the test statistic and its asymptotic distribution under the null. We outline the regularity conditions in Appendix A along with a brief discussion on the importance of such conditions.

\begin{prop}[Testing for Heterogeneities in Network Connectedness]
	\label{prop:test}
	Let $\bar{\mathcal{D}}(u)$ and $\mathbf{V}_{\mathcal{D}} \left(\bar{\mathcal{D}}(u)\right)$ denote the time $u$ posterior mean and variance of the difference between network connectedness across frequency band $d_a$ and $d_b$. Then the test statistic under H$_{0}$ 
\begin{eqnarray}\label{eq:test}
\mathbf{W}\left(\bX,\mathcal{D}_{0}(u)\right) &=& q_{\mathcal{D}} + \left(\bar{\mathcal{D}}(u)-\mathcal{D}_{0}(u) \right)^{\top}\left[ \mathbf{V}_{\mathcal{D}\mathcal{D}} \left(\bar{\mathcal{D}}(u)\right)\right]^{-1} \left(\bar{\mathcal{D}}(u)-\mathcal{D}_{0}(u) \right)\\
&=& q_{\mathcal{D}} + \mathbf{Wald}
\end{eqnarray}
where $\mathbf{Wald}= \left(\bar{\mathcal{D}}(u)-\mathcal{D}_{0}(u) \right)^{\top}\left[ \mathbf{V}_{\mathcal{D}\mathcal{D}} \left(\bar{\mathcal{D}}(u)\right)\right]^{-1} \left(\bar{\mathcal{D}}(u)-\mathcal{D}_{0}(u) \right)$, $q_{\mathcal{D}}$ is the number of restrictions, and $\mathcal{D}_{0}(u)=\mC(u,d_a)-\mC(u,d_b)=0$.
\begin{eqnarray*}
\mathbf{W}\left(\bX,\mathcal{D}_{0}(u)\right) - q_{\mathcal{D}} = \mathbf{Wald} + O_p(1) \rightarrow^{d} \chi^2\left(q_{\mathcal{D}}\right)
\end{eqnarray*}
\end{prop}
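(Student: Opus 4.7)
The plan is to establish the result by combining three ingredients: (i) a Bernstein--von Mises type result for the kernel-weighted quasi-posterior at rescaled time $u$, which places the posterior of the local VAR parameters asymptotically around the QBLL estimator with variance equal to the local sandwich form; (ii) a delta-method / continuous-mapping argument that transfers asymptotic normality through the smooth nonlinear map from $(\bPhi(u),\bSigma(u))$ to the frequency-band contrast $\mathcal{D}(u)=\mC(u,d_a)-\mC(u,d_b)$; and (iii) the Li--Liu identity that decomposes posterior quadratic loss about a point null into $q_{\mathcal{D}}$ plus a Wald-type quadratic form asymptotically equivalent to its frequentist counterpart.

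First, I would invoke the quasi-Bayesian asymptotics of \cite{petrova2019quasi}, together with the Bayesian test-theoretic framework of \cite{li2015bayesian} and \cite{liu2020posterior}, to assert that, under the locally stationary approximation and standard regularity (continuity of $\bPhi(\cdot)$ and $\bSigma(\cdot)$ in the rescaled time index, finite local information, a bandwidth shrinking at an appropriate rate, and Lindeberg-type conditions on $\bepsilon_{t,T}$), the quasi-posterior of $(\bPhi(u),\bSigma(u))$ is asymptotically Gaussian about a $\sqrt{Th}$-consistent local estimator. This Bernstein--von Mises step is precisely what licenses the use of the posterior mean $\bar{\mathcal{D}}(u)$ and posterior variance $\mathbf{V}_{\mathcal{D}}(\bar{\mathcal{D}}(u))$ as the frequentist analogue estimator and its covariance when forming the Wald form.

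Second, by Proposition \ref{prop:3} and equation (\ref{network}), $\mC(u,d)$ is a smooth functional of the local parameters because the frequency response $\bPsi(u)e^{-i\omega}=\bigl(\boldsymbol I - \sum_{\ell}\bPhi_{\ell}(u)e^{-i\omega\ell}\bigr)^{-1}$ is real-analytic in $\bPhi(u)$ on the stability region, and the integrand in the numerator and denominator of $[\btheta(u,d)]_{j,k}$ is a smooth function of the entries of $\bPsi(u)e^{-i\omega}$ and $\bSigma(u)$. Hence the delta method applies to the contrast $\mathcal{D}(u)$, and $[\mathbf{V}_{\mathcal{D}}(\bar{\mathcal{D}}(u))]^{-1/2}(\bar{\mathcal{D}}(u)-\mathcal{D}_0(u))$ is asymptotically standard normal under $H_0$. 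Here I would adapt the device of \cite{breitung2006testing}: although the mapping from VAR coefficients to a frequency-band functional is nonlinear, at the linearised level the restriction $\mathcal{D}_0(u)=0$ reduces to a finite set of linear restrictions in the transformed parameterisation, so the Wald form retains its $\chi^{2}$ distribution with $q_{\mathcal{D}}$ degrees of freedom.

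Finally, the identity $\mathbf{W}(\bX_{t,T},\mathcal{D}_0(u)) = q_{\mathcal{D}} + \mathbf{Wald}$ is the Li--Liu decomposition of posterior quadratic loss, and steps one and two give $\mathbf{Wald}\rightarrow^{d} \chi^{2}(q_{\mathcal{D}})$, yielding the claim. I expect the main obstacle to be the first step: verifying Bernstein--von Mises uniformly for the kernel-weighted quasi-posterior of a locally stationary TVP-VAR, which requires carefully controlling the bias/variance trade-off induced by the kernel bandwidth and the approximation error $|\bX_{t,T}-\widetilde{\bX}_{t}(u_0)|=O_p(|t/T-u_0|+1/T)$ inherited from the local stationarity assumption. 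A secondary, but non-trivial, concern is non-degeneracy of the Jacobian $\nabla_{(\bPhi,\bSigma)}\mathcal{D}(u)$ at the null boundary, without which $\mathbf{V}_{\mathcal{D}}(\bar{\mathcal{D}}(u))$ need not be invertible and the chi-square limit would degenerate to a mixture; I would handle this by assuming a rank condition on the linearised restriction matrix analogous to that in \cite{breitung2006testing}.
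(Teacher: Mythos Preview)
Your proposal is correct and follows essentially the same route as the paper: the paper's proof is a one-sentence appeal to Theorem~3.1 of \cite{li2015bayesian} and \cite{liu2020posterior} together with the observation, via \cite{breitung2006testing}, that connectedness measures are smooth functions of the VMA($\infty$) coefficients so that a Wald form on frequency-band restrictions is $\chi^2(q_{\mathcal D})$. Your three-step outline (Bernstein--von Mises for the QBLL posterior, delta method through the map $(\bPhi,\bSigma)\mapsto\mathcal D(u)$, and the Li--Liu quadratic-loss decomposition) is exactly the content behind that one-sentence proof, and your flagged regularity concerns (bandwidth rate, Jacobian rank) are precisely the ``mild regularity assumptions'' the paper leaves implicit.
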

\begin{proof}
  See Appendix A.
\end{proof}

It is important to note that it is straightforward to generalize Equation (\ref{eq:test}) to include multiple restrictions. This may be applicable if one requires testing equivalence among more than two frequency bands. We also note that one may also wish to utilize the above to test for differences between directional network connections over frequency bands. In the context of the above one would take the difference between net-directional connections, or pairwise directional connections, over frequency bands and compute the test in an analogous manner to below. 

For estimation purposes, the test statistic only requires the posterior mean and the posterior variance of $\mathcal{D}(u)$, $\bar{\mathcal{D}}(u)$ and $\mathbf{V}\left(\bar{\mathcal{D}}\right)$. Let $\left\lbrace\mathcal{D}^{[r]}\right\rbrace^{R}_{r=1}$ denote the posterior draws such that $\mathcal{D}^{[r]}=\widehat{\mC}^{[r]}(u,d_a)-\widehat{\mC}^{[r]}(u,d_b)$ is the $r$th posterior draw of the difference between estimates of network connectedness over frequency band $d_a$ and $d_b$. Then, the estimate of our test statistic for heterogeneities between network connectedness is given by

\begin{eqnarray}\label{eq:test_est}
\widehat{\mathbf{W}}\left(\bX,\mathcal{D}(u)=0\right) = \frac{\frac{1}{R}\sum^{R}_{r=1}\left(\mathcal{D}^{[r]}(u)\right)^2}{\frac{1}{R}\sum^{R}_{r=1}\left(\mathcal{D}^{[r]}(u)-\bar{\bar{\mathcal{D}}}(u) \right)^2},\: \text{with} \: \bar{\bar{\mathcal{D}}}(u)=\frac{1}{R}\sum^{R}_{r=1}\mathcal{D}^{[r]}(u) 
\end{eqnarray}

Under the null, we have $\mathbf{W}\left(\bX,\mathcal{D}_{0}(u)\right) - q_{\mathcal{D}}  \rightarrow^{d} \chi^2\left(q_{\mathcal{D}}\right)$ with $q_{\mathcal{D}}=1$ in this particular case.\footnote{ In the classical setting one uses Wald tests to check a variety of restrictions, such as a parameter of interest being equal to zero, or equivalence between two parameters of interest under the null hypothesis. When one tests the latter, the resulting Wald test is $\backsim \chi^2(1)$.} Thus, we only need to compare $\widehat{\mathbf{W}}\left(\bX,\mathcal{D}(u)=0\right)$ to the critical values of the $\chi^2\left(1\right)$ distribution. Rejecting the null implies that the time $u$ network connectedness over frequency band $d_a$ is statistically different to the corresponding time $u$ network connectedness over frequency band $d_b$. 

It may also be pertinent to test for differences in connectedness over time. Here we fix the frequency band, and now consider differences over time. In this case we test the null hypothesis, H$_{0}:\: \mC(u_1,d)=\mC(u_2,d)$ that network connectedness at time $u_1$ and $u_2$ across frequency band $d$ are equivalent against the alternative, H$_{1}:\: \mC(u_1,d)\neq\mC(u_2,d)$. This is equivalent to testing H$_{0}:\: \mC(u_1,d)-\mC(u_2,d)=0$ against H$_{1}:\: \mC(u_1,d)-\mC(u_2,d)\neq0$. Now letting $\bar{\mathcal{D}}(s)$ and $\mathbf{V}_{\mathcal{D}\mathcal{D}} \left(\bar{\mathcal{D}}(s)\right)$ denote the posterior mean and variance of the difference between network connectedness at times $u_1$ and $u_2$ and replacing $\bar{\mathcal{D}}(u)$ and $\mathbf{V}_{\mathcal{D}\mathcal{D}} \left(\bar{\mathcal{D}}(u)\right)$ in Proposition \ref{prop:test} with these quantities delivers a Wald-type test statistic following a $\chi^2\left(1\right)$ distribution. Again, all we need to do is compare the test statistic with critical values of the $\chi^2\left(1\right)$ distribution.

\section{Monte Carlo Study}\label{Simulations}\label{sec:Sim} 
In this section, we conduct a Monte Carlo exercise to understand the finite sample properties of our connectedness measures. In order to motivate the need to focus on  connections forming conditional on the persistence of shocks, we generate data with different levels of persistence throughout time and also changes in the covariance structure. This will induce differences in network connectedness measures we compute over different frequency bands. Here we concentrate on low and high frequency bands and consider four different data generating processes (DGP) to highlight their uses.

For simplicity, we focus on bivariate VAR(2) models with time-varying parameters and time-varying covariance matrices: 
\begin{eqnarray*}
\bX_{t,T} &=& \bPhi_{0}(u) + \bPhi_{1}(u)\bX_{t-1,T} + \bPhi_{2}(u)\bX_{t-2,T} + \bepsilon_{t,T}, \\ \bepsilon_{t,T} &=& \bSigma^{-1/2}(u)\bbeta_{t,T}, \: \bbeta_{t,T} \backsim \left(0,\mathbf{I}_{2}\right)
\end{eqnarray*} 

where $\bPhi_{0}(u)$ contains the time-varying intercepts and $\bPhi_{1}(u)$ and $\bPhi_{2}(u)$ contain the time-varying autoregressive parameters. The time-varying covariance matrix $\bSigma(u) = \mathbf{A}^{-1}(u)\mathbf{H}(u)\left(\mathbf{A}^{-1}(u)\right)^{\top}$ with $\mathbf{A}^{-1}(u)$ being a lower triangular matrix with a unit diagonal and $\mathbf{H}(u)$ is a $2 \times 2$ diagonal matrix.

\textbf{DGPI:} Our first DGP has residuals such that, $ \bbeta_{t,T} \backsim \text{NID}\left(0,\mathbf{I}_{2}\right)$. The time-varying intercepts follow the process:
\begin{eqnarray*}
\left[\bPhi_{0}(u)\right]_j &=& 0.0025 \sin \left(0.004\pi t\right) + 0.15 \sum^{t}_{i=1} \frac{\nu_{i}}{\sqrt{t}},  \: \nu_i \backsim \text{NID} \left(0,0.001^2 \right),\: j=1,2 \\
\end{eqnarray*}

For the time-varying autoregressive parameters, we have
\begin{eqnarray*}
\left[\bPhi_{g}(u)\right]_{j,k} = \begin{cases}
0.05 \sin \left(0.002\pi t\right) + 0.75 \sum^{t}_{i=1} \frac{\kappa_{i}}{\sqrt{t}},\: &  t \in \{1,...,500\} \: \forall g,j,k=1,2 \\
0.45 \sin \left(0.002\pi t\right) + 0.75 \sum^{t}_{i=1} \frac{\kappa_{i}}{\sqrt{t}},\: & t \in \{501,...,1000\} \: j=k=1,\: j=k=2 \\
0.05 \sin \left(0.002\pi t\right) + 0.75 \sum^{t}_{i=1} \frac{\kappa_{i}}{\sqrt{t}},\: &  t \in \{501,...,1000\} \: j=1,k=2\:\text{\&}\: j=2,k=1  
\end{cases}
\end{eqnarray*}

with $\kappa_i \backsim \text{NID} \left(0,0.0001^2 \right)$. The (2,1) element of $\mathbf{A}(u)$ have the following dynamics:
\begin{eqnarray*}
\left[\mathbf{A}(u)\right]_{2,1}= \begin{cases} 0.03 \sin \left(0.002\pi t\right) + 0.7 \sum^{t}_{i=1} \frac{\upsilon_{i}}{\sqrt{t}},\: &  t \in \{1,...,500\}\\
1.5 \sin \left(0.002\pi t\right) + 0.7 \sum^{t}_{i=1} \frac{\upsilon_{i}}{\sqrt{t}},\: &  t \in \{501,...,1000\}
\end{cases}
\end{eqnarray*}
with $ \upsilon_i \backsim \text{NID} \left(0,0.3^2 \right)$. The diagonal elements of $\mathbf{H}(u)$ follow
\begin{eqnarray*}
\log \left[\mathbf{H}(u)\right]_{j,j} &=& \mu_{j} + \lambda_{j}\left( \log \left[\mathbf{H}(u-1)\right]_{j,j} - \mu_{j} \right) + \xi_{j,t}
\end{eqnarray*}

where $ \xi_{j,t} \backsim \text{NID}\left(\mu_{j}, 0.1^2/(1-\lambda_{j}) \right),\: \mu_j=0.01,\: \lambda_j=0.95$.

This DGP has little to no dependence for the first 500 observations which means connectedness at both high and low frequency bands will be low and close to zero. The latter half of the sample sees the AR coefficients in each equation become persistent as the sin wave becomes negative. Note also that the contemporaneous relationship intensifies. This induces connections at high frequency bands while connections at low frequency bands should be low and close to zero. Note we also allow for non-Gaussian residuals in this DGP such that we draw $\bbeta_{t,T}$ from a multivariate student-$t$ distribution with 5 degrees of freedom. These results are in Appendix C.

\textbf{DGPII:} For our second DGP, the time-varying intercepts follow the process:
\begin{eqnarray*}
\left[\bPhi_{0}(u)\right]_j &=& 0.25 \sin \left(0.004\pi t\right) + 0.15 \sum^{t}_{i=1} \frac{\nu_{i}}{\sqrt{t}},  \: \nu_i \backsim \text{NID} \left(0,0.1^2 \right),\: j=1,2 \\
\end{eqnarray*}

and the time-varying autoregressive parameters follow:
\begin{eqnarray*}
\left[\bPhi_{g}(u)\right]_{j,k} = 0.25 \sin \left(0.004\pi t\right) + 0.75 \sum^{t}_{i=1} \frac{\kappa_{i}}{\sqrt{t}}, \forall g,j,k=1,2
\end{eqnarray*}

with $\kappa_i \backsim \text{NID} \left(0,0.3^2 \right)$. The (2,1) element of $\mathbf{A}(u)$ and the diagonal elements of $\mathbf{H}(u)$ follow the processes:
\begin{eqnarray*}
\left[\mathbf{A}(u)\right]_{2,1}&=& 0.3 \sin \left(0.008\pi t\right) + 0.7 \sum^{t}_{i=1} \frac{\upsilon_{i}}{\sqrt{t}} \\
\log \left[\mathbf{H}(u)\right]_{j,j} &=& \mu_{j} + \lambda_{j}\left( \log \left[\mathbf{H}(u-1)\right]_{j,j} - \mu_{j} \right) + \xi_{j,t},
\end{eqnarray*}

where $\upsilon_i \backsim \text{NID} \left(0,0.3^2 \right)$ and $\xi_{j,t} \backsim \text{NID}\left(\mu_{j}, 0.1^2/(1-\lambda_{j}) \right),\: \mu_j=0.01,\: \lambda_j=0.95 $. 

This DGP induces two distinct periods of persistence during observations 100-200 and 600-700 which amplifies connectedness at the low frequency band.

\textbf{DGPIII:} Our third DGP is the same as DGPII but relaxes the assumption that $\bbeta_{t,T}$ are Gaussian. Instead we assume that the residuals follow a multivariate student-$t$ distribution with $5$ degrees of freedom.

\textbf{DGPIV:} Our fourth DGP is the same as DGPII, but increases the periodicity of the sin functions in the time-varying autoregressive matrices from $\sin \left(0.004\pi t\right)$ to $\sin \left(0.006\pi t\right)$. This generates three distinct periods of persistence during observations 50-150, 300-450, and 700-850. 

\begin{figure}[!hp]
\centering
\scalebox{1.00}{\includegraphics{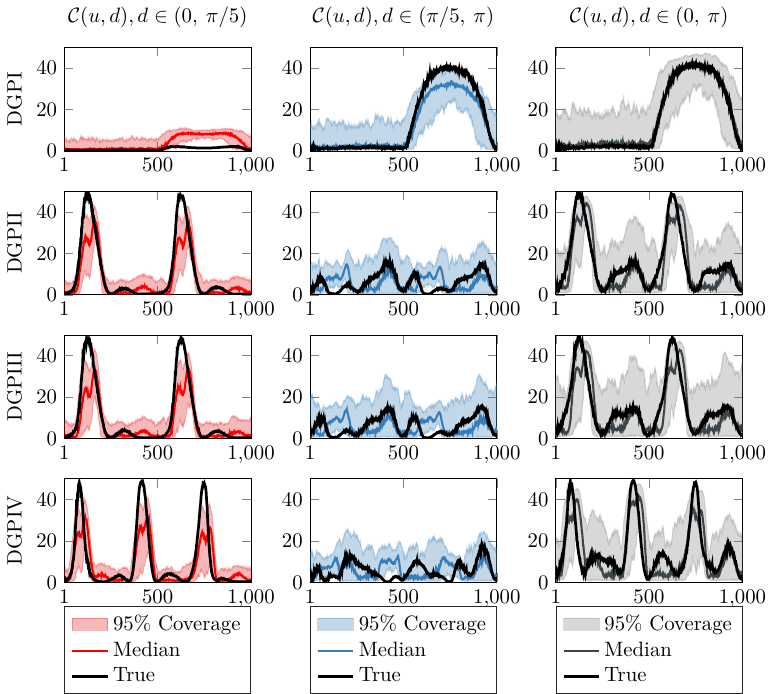}}
    \caption{\textbf{Dynamic network connectedness measures: True and fitted values} \\ \small{Notes: This figure plots the true network connectedness measures for three data generating processes following bi-variate TVP VAR(2) models along with the median and 95\% quantiles of estimated network connectedness measures with bandwidth $W$=8. The left columns report network connectedness on the low-frequency band, $d\in(0,\:\pi/5)$, the middle columns show network connectedness on the high-frequency band, $d\in(\pi/5,\: \pi)$, and the right columns show the aggregate network connectedness such that $d\in(0,\:\pi)$. DGPI (top row) is a TVP VAR(2) model with Gaussian errors, we introduce a break in the time-varying autoregressive matrices and contemporaneous relations from observation 500 that induces large connections across the high frequency band. DGPII (second row) is a TVP VAR(2) where time-varying intercepts and autoregressive matrices following sin wave functions with a stochastic error, time-varying covariance matrix where the off-diagonals follow sin wave functions with a stochastic error, and the diagonal elements follow a stationary AR(1) processes. DGPIII (third row) is a TVP VAR(2) model with student-$t$ errors, time-varying intercepts and autoregressive matrices following sin wave functions with a stochastic error, time-varying covariance matrix where the off-diagonals follow sin wave functions with a stochastic error, and the diagonal elements follow a stationary AR(1) processes. DGPIV (bottom row) is the same as DGPII, but with an increase in the periodicity of the respective sin wave functions the time-varying intercepts and autoregressive matrices follow.}}
      \label{fig:MC_W8}
\end{figure}

For each of the four DGPs, we generate 100 simulations of length $T=1000$ and compute the  network connectedness measures. We use the median over these simulations as the true network connectedness. Then, for each of the 100 simulations of DGPI--DGPIV, we fit the TVP-VAR model we outline in Section \ref{get_measures}. In fitting this model we take 1000 draws from the posterior distribution, calculate our network connectedness measures and then save the posterior median. For this exercise, we compute network connectedness on two frequency bands that cover the spectrum. The low-frequency band, which empirically pertains to persistent network connections, is $d\in(0,\:\pi/5)$, and the high-frequency band, pertaining to transitory network connections, is $d\in(\pi/5,\: \pi)$. For completeness, we compute the aggregate connectedness measures that considers the entire spectrum such that $d\in(0,\:\pi)$; this corresponds to a dynamic version of the \cite{diebold2014} connectedness measure. 

A final noteworthy point is the choice of bandwidth, $W$, for the kernel weights that induce time-variation into the TVP-VAR model. In this exercise, we consider $W=\{8,\:12,\:18\}$. However, for ease of exposition, we only report plots of the network connectedness measure estimates using $W$=8 in the main text, results of fitted values from $W=\{12,\:18\}$ are in Appendix C. The larger the bandwidth, the smoother the network measures become. This is because larger bandwidths assign weights to a higher number of observations around the one of interest. In general, we find that larger $W$ results in poorer fit. This highlights the importance of selecting an appropriate bandwidth for the kernel weights relative to the data application. 

As we discuss in \ref{get_measures}, from a practical perspective, we encourage researchers to explore the robustness of their results to different bandwidths. For example, consider a low frequency forecaster looking to predict returns one-month ahead today, using our network connectedness measures, would likely place zero weight on data from the burst of the dot-com bubble and 2008 recession. Likewise, a high frequency investor would likely place little to no weight on data from one-year prior. As we show in Appendix C, lengthening the bandwidth in our Monte Carlo experiment causes the surges in estimates of connectedness to be more gradual. 

If the data is low frequency data such as monthly yields, then one could argue to use a wider bandwidth as changes in these data are far smoother than returns or return volatility. If one is looking to describe the nature of connections that form on persistent and transitory components of shocks then we encourage researchers to use multiple bandwidths to check how these dynamics are influenced by such changes. We do not suspect these changes would drastically change the conclusions or results in most applications. 

In Appendix C, we conduct further robustness checks for our simulation analysis. In Appendix C.3 we provide analysis on the performance of rolling VAR models for our DGPs. These results show that such connectedness estimates less accurate relative to our approach. In particular, estimates are highly sensitive to the window size and fail to accurately capture the peaks and troughs in connectedness; the latter is prominent as the complexity of the DGP increases (e.g. DGPII-DGPIV). Meanwhile in Appendix C.4 we conduct a Monte Carlo study on larger scale VAR models under DGPI containing $N$=10 and $N$=25 variables respectively. These results show that our approach is robust to increasing the number of variables and tracks connections well within larger-scale models.

\begin{figure}[!hp]
\centering
\scalebox{1.00}{\includegraphics{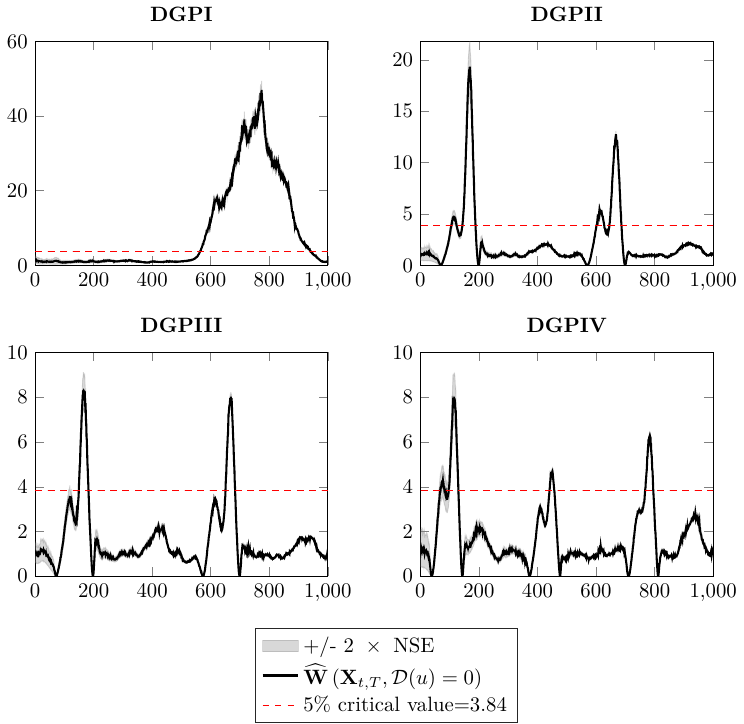}}
    \caption{\textbf{Tests for differences between network connectedness across different frequency bands} \\ \small{Notes: This figure plots the test statistics and $+/-$ 2 $\times$ the numerical standard errors for heterogeneities between network connectedness measures across different frequency bands from four DGPs. The test statistic is $\widehat{\mathbf{W}}\left(\bX_{t,T},\mathcal{D}(u)=0\right)-1$ where $\mathcal{D}(u)=\widehat{\mC}(u,d)-\widehat{\mC}(u,c)$ with $d \in(0,\:\pi/5)$ corresponding to the low frequency band and $c \in(\pi/5,\:\pi)$ corresponding to the high frequency band. The dashed black line is the 5\% critical value from the $\chi^2(1)$ distribution =3.84. Values greater than 3.84 reject the null hypothesis of equivalent network connections across frequency band $d$ and $c$ in favor of differences. DGPI (top left panel) is a TVP VAR(2) model with Gaussian errors, we introduce a break in the time-varying autoregressive matrices and contemporaneous relations from observation 500 that induces large connections across the high frequency band. DGPII (top right panel) is a TVP VAR(2) where time-varying intercepts and autoregressive matrices following sin wave functions with a stochastic error, time-varying covariance matrix where the off-diagonals follow sin wave functions with a stochastic error, and the diagonal elements follow a stationary AR(1) processes. DGPIII (bottom left panel) is a TVP VAR(2) model with student-$t$ errors, time-varying intercepts and autoregressive matrices following sin wave functions with a stochastic error, time-varying covariance matrix where the off-diagonals follow sin wave functions with a stochastic error, and the diagonal elements follow a stationary AR(1) processes. DGPIV (bottom right panel) is the same as DGPII, but with an increase in the periodicity of the respective sin wave functions the time-varying intercepts and autoregressive matrices follow.}}
      \label{fig:test}
\end{figure}

Figure \ref{fig:MC_W8} reports the true network connectedness measures and the median and 95\% quantiles of corresponding estimates from the TVP VAR model using a kernel bandwidth of $W$=8. We report network connectedness over the low-frequency-band, the high-frequency-band, and aggregate, in the left, middle, and right columns respectively. The top row corresponds to DGPI, and the second, third and fourth rows results from DGPII, DGPIII, and DGPIV respectively. As we can see, the distribution of estimates for each DGP track the true values remarkably well. In almost all cases, the true value lies within the 95\% quantiles of the distribution from model estimates. This plot shows that our method provides an accurate representation of horizon specific network connectedness, even when the underlying process has complex dynamics and the true error distribution is non-Gaussian.

Figure \ref{fig:test} further reports the estimates, and numerical standard error bounds of our test for heterogeneities between high frequency band and low frequency band network connectedness measures (Equation \ref{eq:test_est}) for each of the four DGPs we use in the Monte Carlo study. Specifically, for each observation we test the null hypothesis $\mathcal{D}(u)=0$ where $\mathcal{D}(u)=\widehat{\mC}(u,d)-\widehat{\mC}(u,c)$ with $d \in(0,\:\pi/5)$ corresponding to the low frequency band and $c \in(\pi/5,\:\pi)$ corresponding to the high frequency band. In each plot, we also report the 5\% critical value from the $\chi^2(1)$ distribution of 3.84. Test statistics exceeding this value reject the null in favor of heterogeneities between network connectedness measures across frequency bands.\footnote{We obtain numerical standard errors in a similar manner to \cite{li2015bayesian}.} As we can see, our test statistic identifies significant differences between low and high frequency band network connectedness measures that correspond with the peaks we observe in Figure \ref{fig:MC_W8} for each DGP. We can see with the non-Gaussian DGP, DGPIII the estimates of the test statistics are smaller relative to the analogous Gaussian DGP, DGPII. However, there are still clear rejections.

\begin{figure}[!hp]
\centering
\scalebox{1.00}{\includegraphics{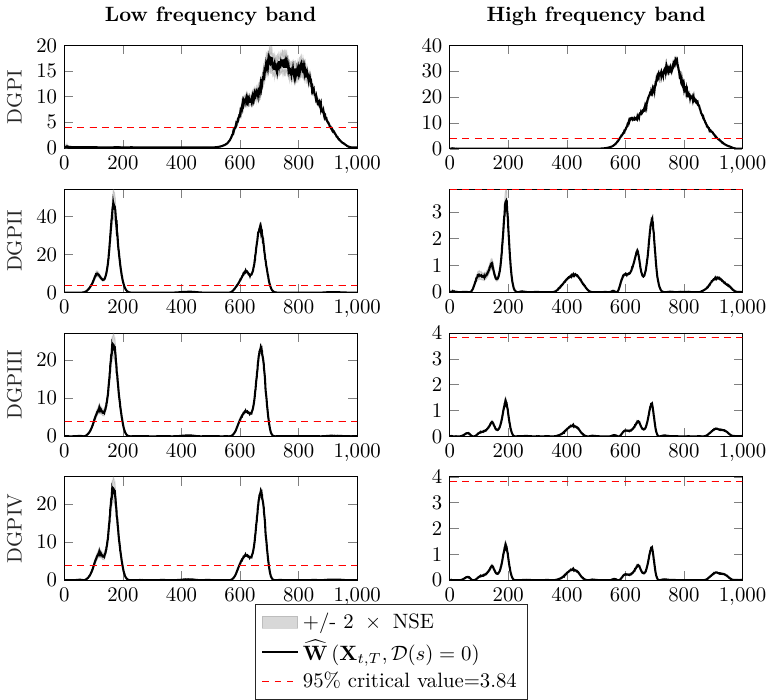}}
    \caption{\textbf{Tests for differences between network connectedness over time} \\ \small{Notes: This figure plots the test statistics and $+/-$ 2 $\times$ the numerical standard errors for heterogeneities between network connectedness measures across different frequency bands from four DGPs. The test statistic is $\widehat{\mathbf{W}}\left(\bX_{t,T},\mathcal{D}(s)=0\right)-1$ where $\mathcal{D}(u)=\widehat{\mC}(u_1,d)-\widehat{\mC}(u_2,d)$ with $u_1=1,\: u_2=\{2,3,\dots,1,000\}$. Here $d \in(0,\:\pi/5)$ corresponds to the low frequency band (LHS plots) and $d \in(\pi/5,\:\pi)$ corresponds to the high frequency band (RHS plots). The dashed black line is the 5\% critical value from the $\chi^2(1)$ distribution =3.84. Values greater than 3.84 reject the null hypothesis of equivalent network connections at time $u_1,\:u_2$ across frequency band $d$. DGPI (top row) is a TVP VAR(2) model with Gaussian errors, we introduce a break in the time-varying autoregressive matrices and contemporaneous relations from observation 500 that induces large connections across the high frequency band. DGPII (second row) is a TVP VAR(2) where time-varying intercepts and autoregressive matrices following sin wave functions with a stochastic error, time-varying covariance matrix where the off-diagonals follow sin wave functions with a stochastic error, and the diagonal elements follow a stationary AR(1) processes. DGPIII (third row) is a TVP VAR(2) model with student-$t$ errors, time-varying intercepts and autoregressive matrices following sin wave functions with a stochastic error, time-varying covariance matrix where the off-diagonals follow sin wave functions with a stochastic error, and the diagonal elements follow a stationary AR(1) processes. DGPIV (fourth row) is the same as DGPII, but with an increase in the periodicity of the respective sin wave functions the time-varying intercepts and autoregressive matrices follow.}}
      \label{fig:test2}
\end{figure}

We now test for differences in connectedness measures across the same frequency band over time. To do so, we test the first time period against all remaining 999 observations from our DGPs, $u_1=1,\: u_2=\{2,3,\dots,1,000\}$. We do this for both the low-frequency band, $d\in(0,\:\pi/5)$, and the high-frequency band, $d\in(\pi/5,\: \pi)$. Figure \ref{fig:test2} reports the estimates of the test statistics, their numerical standard error and the corresponding 95\% critical value. First, considering DGPI, it is clear that connectedness across the high frequency band exhibits significant differences between the first observation when we increase connections at observation 500. It is noteworthy to mention that we also see this for the low frequency band. This is expected as we can see that the estimates from our simulations exhibit slight bias here. However, for DGPII--DGPIV we have rejections relative to the first observation at the corresponding periods where we create connectedness across the low frequency band, and no rejections across the high frequency band.

Overall, this shows that our testing procedure indicates rejections of equality in connectedness forming over different frequency bands, and over time for connectedness across the same frequency band, where we should expect to see such differences.

\section{Monitoring Persistence in Uncertainty Networks using S\&P500 Stocks}\label{sec:empirical}

Changes in uncertainty can play a key role in driving business cycles and financial turmoil \citep{bloom2018really}. Identifying the sources of such risks is a focus for researchers and practitioners. Some quantify systemic risks emanating from financial markets and sectors \citep[e.g.][]{billio2012econometric,acemoglu2015systemic}, while others examine how sectoral shocks affect aggregate fluctuations \citep[e.g.][]{gabaix2011granular,acemoglu2017microeconomic,atalay2017important}. Related to the above, and in response to financial crises, many countries are implementing policies to monitor systemic risk and financial stability. Therefore, we use our framework to identify new measures of transitory and persistent linkages for S\&P500 constituents. In section \ref{sec:sector}, we examine sectoral connectedness as well as network structures at a granular level for financial firms. 

We use high-frequency tick data for all stocks listed on the S\&P500 from 5 July 2005 to 31 August 2018 and compute realised volatility (RV) for all stocks in the sample. To obtain firm-level RVs, we restrict our analysis to five-minute returns during New York Stock Exchange (NYSE) trading hours (i.e. 09:30-16:00). The data are time-synchronised using the same timestamps, eliminating transactions executed on Saturdays and Sundays, US holidays, 24-26 December and 31 December to 2 January due to low activity on these days. This leaves us with 3278 trading days. After cleaning the data, we are left with a cross section of 496 stocks.

To obtain our network connectivity measures, we estimate a TVP VAR model on $N$=496 stocks with $p$=2 lags on our $T$=3278 days of data. We estimate our horizon-specific dynamic network measures on a 48-core server. For each $t\in\{1,2,\dots, T\}$, we generate 500 simulations of the (quasi) posterior distribution, resulting in a total estimation time of 10 days. We define transitory (short-term) network links as those that form over a 1-5 day horizon, and persistent (long-term) network links at horizons greater than 5 days (i.e. 5 days to the $\infty$ horizon). Our choice of these horizons stems from the existing literature on volatility modelling using high-frequency data, which shows that daily and weekly fluctuations contain salient information for future volatility \citep[e.g.][]{corsi2012discrete}.\footnote{In this case, due to the size of the system, we diagonalise the covariance matrix of the VAR as an additional precaution to avoid overfitting. For the sectoral network measures in section \ref{sec:sector}, we use a full covariance matrix.} 

\begin{figure}[!ht]
\centering
\scalebox{0.7}{\includegraphics{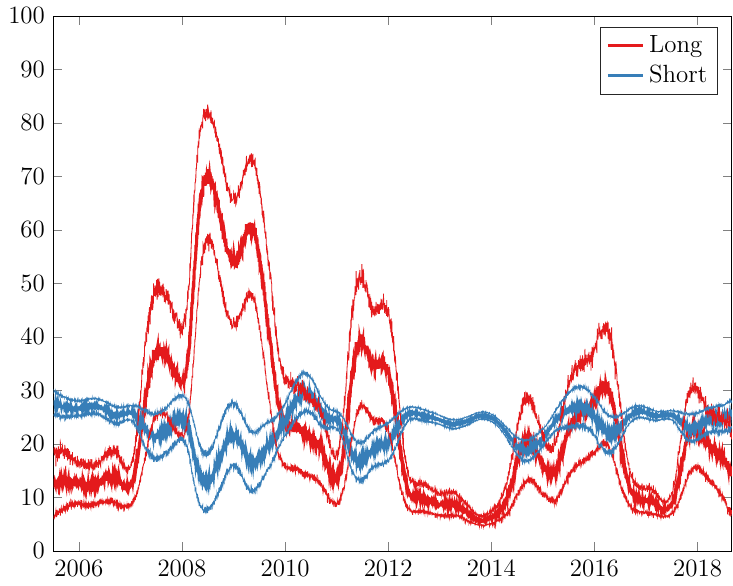}}
		\caption{\textbf{Horizon specific dynamic total network connectedness for S\&P500 constituents} \\ \small{This figure plots the posterior median and 1-standard deviation percentiles of horizon specific dynamic total network connectedness, $\mC(u,d),\: d \in \{\text{S},\text{L}\} $ from July 5, 2005 to August 31, 2018. S refers to the short-term,  transitory connectedness, which we define as 1 day to 1 week; and L refers to long-term, persistent connectedness, which we define as horizons $>$ 1 week. The spectrum with which the horizons stem from link to the frequency with which we observe the data.}}
      \label{NETSSP}
\end{figure}

Figure \ref{NETSSP} plots measures of transitory and persistent network connectedness from 5 July 2005 to 31 August 2018. Overall, there are significant differences in the level of horizon-specific connectedness across our estimation sample. In general, long-term linkages are muted during periods of economic/financial calm. However, it is clear that long-term connectedness spikes during periods of economic recession or major stock market events. For example, long-term connectedness starts to rise in 2006 and continues to rise during the 2007-2009 recession. In addition, we can see that long-term connectedness increases during 2010-2012. This may be due to fears of contagion from the European sovereign debt crisis, the 2010 flash crash and when the S\&P500 entered a bear market in 2011, albeit a short-lived one. We can also see an increase in short- and long-term connectedness in mid-to-late 2015, which is consistent with the stock market sell-off starting in August 2015; this may also be related to fears of contagion from the Chinese stock market crash in late 2015. Overall, we see that long-term network connectedness increases during periods of high systemic risk across our sample.

\subsection{Transitory and Persistent Network Connectedness of S\&P500 Sectors}\label{sec:sector}

Here we focus on the overall network connectedness driven by transitory and persistent shocks to companies in a given sector. We classify stocks into eleven main sectors according to the Global Industry Classification Standard (GICS)\footnote{GICS is an industry taxonomy developed by MSCI and Standard \& Poor's for use by the global financial community.}. These are: Consumer Discretionary (COND) with 73 stocks; Consumer Staples (CONS) with 34 stocks; Health Care (HLTH) with 53 stocks; Industrials (INDU) with 73 stocks; Information Technology (INFT) with 67 stocks; Materials (MATR) with 33 stocks; Real Estate (REAS) with 29 stocks; Financials (SPF) with 66 stocks; Energy (SPN) with 36 stocks; Communication Services (TELS) with 6 stocks; and Utilities (UTIL) with 26 stocks. Further details, including descriptive statistics (Table D.1 of Appendix D) for annualised daily RVs, which we compute as $100\times\sqrt{252 \times RV_t}$ pooling information across companies within each sector over the sample period from 5 July 2005 to 31 August 2018, are reported in Appendix D. The energy sector has the highest mean, while the financial sector has the highest standard deviation, skewness and kurtosis. Overall, we can see that these sectoral RVs show significant differences in terms of the first four moments as well as the minimum and maximum values.  

For each of the 11 sectors, we obtain dynamic network measures by estimating a TVP VAR model on all stocks with two lags on our 3278 days of data. On each day, we take 500 draws from the (quasi) posterior distribution.\footnote{We estimate our horizon-specific dynamic network measures on a 64-core server, resulting in a total estimation time of about 4--5 hours to obtain network estimates from the 11 sectors.}. We define transient and persistent network connectivity in the same way as above.

In Figure \ref{fig:app_net_sec} we plot the posterior median and 95\% confidence bands for transitory and persistent network connectivity for each sector as in Equation \ref{network}. These follow directly from the manipulations of the estimates of the dynamic adjacency matrices (see equation \ref{eq:net_estimation}). Grey bars in these figures represent periods where there are statistically significant differences at the 5\% level between transient and persistent network connectivity.\footnote{We plot the values of these test statistics in Figure D.1 in Appendix D.} Overall, we observe significant differences between these horizon-specific networks for each sector. In general, network connectedness due to the persistent component of shocks exceeds that due to the transitory component of shocks during periods of market turbulence. Then, during periods of calm, the transitory part of the networks becomes more pronounced. 

Comparing these measures across sectors, we can see from the real estate and financial sectors that surges in network connectedness driven by persistent shocks drive uncertainty in the sector for much longer periods of time during the Great Recession relative to other sectors, e.g. CONS and HLTH. In addition, the magnitude of persistent network connectedness from the real estate and financial sectors is much greater. Note also that throughout this period we observe much more frequent evidence in favour of statistical differences between networks driven by transitory and persistent shocks for these sectors; particularly relative to COND, CONS, HLTH and MATR. Although the other sectors show spikes in persistent network connectedness, these do not occur until around February 2008. This highlights how long-term systemic risks within the real estate and financial sectors intensify during this period and are stronger relative to other sectors.

We also see clear spikes in persistent sectoral network connectedness in May-October 2011 and again in 2015-2016. The former coincides with the S\&P500 entering a bear market and the latter with declines in major stock markets around the world. In 2015-2016, we see much lower long-term connectedness of consumer staples, utilities, real estate and telecoms relative to other sectors. We expect the long-term linkages of industrials, materials, energy, information technology and financials to be strong during this period, as the decline in global equity markets is linked to falling commodity prices and the depreciation of Asian currencies against the US dollar. 

\begin{landscape}
\begin{figure}
\centering
\includegraphics[width=8in]{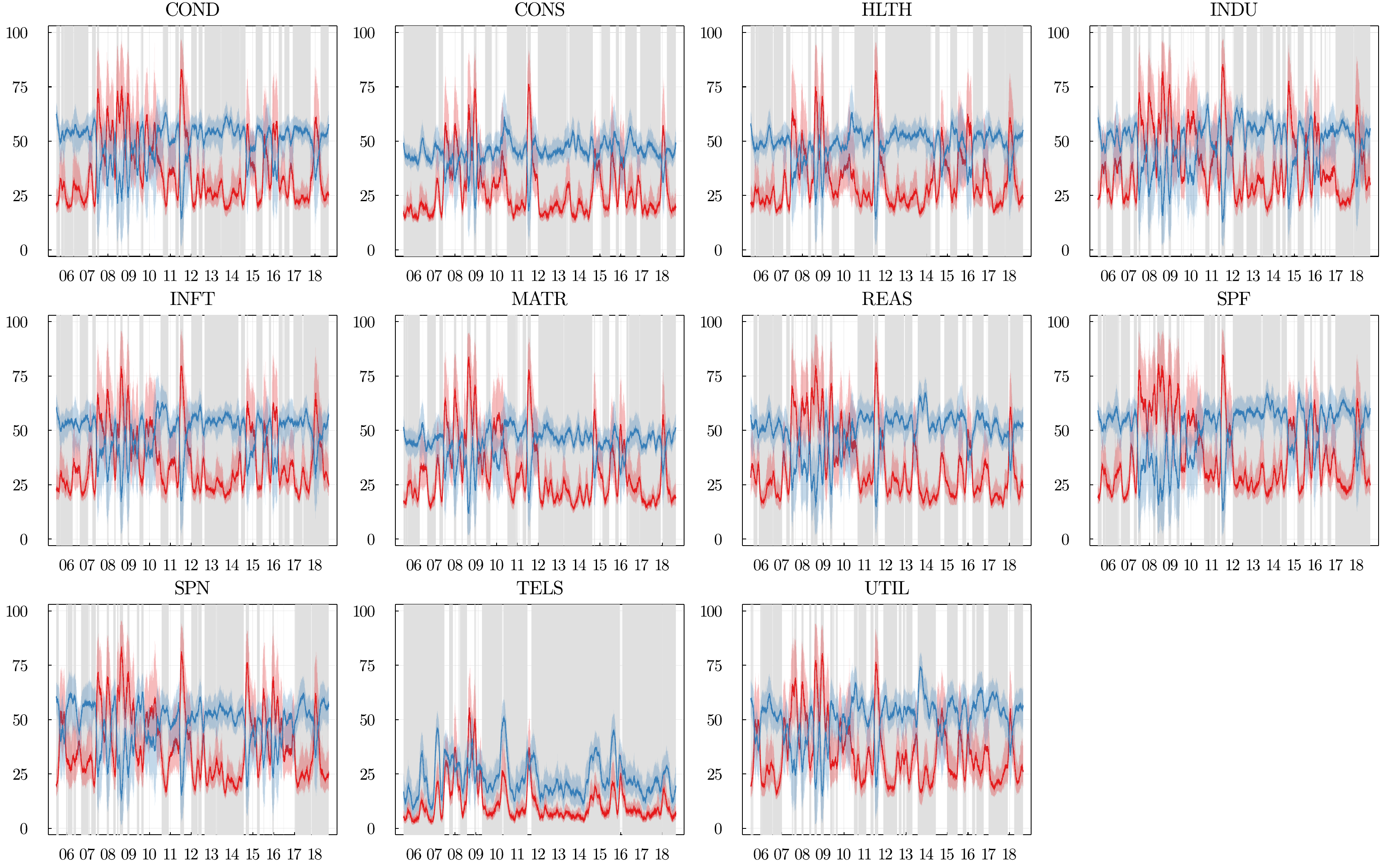}
    \caption{\textbf{Dynamics and Persistence in the U.S. Uncertainty Networks} \\ \small{This figure plots the (quasi) posterior median and 95\% confidence bands of network connectedness specific to the transitory, or short-term (in blue) and persistent, or long-term (in red) shocks to realized volatility of the S\&P500 sectors: Consumers Discretionary (COND), Consumer Staples (CONS), Health Care (HLTH), Industrials (INDU), Information Technology (INFT), Materials (MATR), Real Estate (REAS), Financials (SPF), Energy (SPN), Comunication Services (TELS), and Utilities (UTIL) from July 8, 2005 to August 31, 2018. We define transitory (short-term), as connections made over the 1 day to 1 week horizon; we characterize connections greater than 1 week as persistent (long-term). Grey bars indicate periods with significant heterogeneity in persistence structures.}}
      \label{fig:app_net_sec}
\end{figure}
\end{landscape}

In terms of sectoral network linkages due to transitory shocks, the main differences we observe are in the size of the linkages. Note, however, that there are subtle differences in the time profiles. For each sector, we see that transitory linkages are much stronger during periods of uncertainty at the firm level than persistent linkages during periods of calm. 

Digging deeper into our investigation of whether there are significant differences between transitory and persistent network connectedness for each sector, two main points emerge. First, we document significant heterogeneity in persistence over substantial periods of time. Figure \ref{fig:app_net_sec} shows that during periods of tranquillity, transitory network linkages between sectors are stronger relative to linkages from persistent shocks. During periods of turbulence, however, our results show that persistent linkages intensify in all sectors, but their magnitude differs considerably.\footnote{From Figure D.1 in Appendix D we also document that the magnitude of the differences evolves substantially over time.} Second, statistical differences between transitory and persistent connectedness in one sector do not necessarily imply differences in other sectors. For example, over the 2014-2017 period, we see clear differences in the rejections that transitory and persistent connectedness are equivalent in REAS, SPF, SPN and UTIL.

In general, the temporal nature of our results aligns well with \cite{bianchi2019modeling}, which documents a regime dependent impact of systemic risk on financial markets. They also provide empirical support for \cite{gabaix2011granular} and \cite{acemoglu2012network}, and uncover new measures of sectoral uncertainties (or sector-wide risks). Testing for statistical differences between transitory and persistent sectoral network connectedness adds further substance to our suggestion that one should consider dynamic network structures that form across frequency bands. 

For researchers, our measures of network connectedness may contain useful information for real economy or forecasting purposes; there is already evidence that network connectedness contains predictive content for the real economy \citep[e.g.][]{barunik2020asymmetric}. For practitioners, tracking persistence in sectoral networks can be useful for informing macroprudential policy. This is because one can use these measures as online monitoring tools to study the evolution and persistence of sectoral network connectedness.

Although we provide evidence of substantial heterogeneity in the persistence of network structures, there are commonalities in the time profiles. We attribute this to the high degree of correlation between the data used to proxy uncertainty in our investigation. \cite{herskovic2016common} exploits the correlation structure of idiosyncratic return volatilities and shows that a common factor among the drivers of firm-level volatilities has pricing implications. Our network connectedness measures, by definition, refer to this correlation structure and provide an aggregate description of the network at each point in time. However, our network measures are more informative. We are able to obtain measures that contain information about the overall network structure. These may relate to directional connections (in and out degrees) or concentration (i.e. a high influence of a small number of firms/nodes on the overall network); both of which have been shown to contain information with economic implications \citep[see e.g.][]{herskovic2018networks,herskovic2020firm}. 

\subsection{Network Connections at a Granular Level}

By focusing on shocks to a single financial institution that affect the wider system, our research contributes to the large literature on measuring systemic risk. Many understand systemic risk as many market participants realizing severe losses as a result of propagation throughout the (financial) system.\footnote{For a comprehensive review of the literature on systemic risk, see \cite{benoit2017risks}.} During periods of financial turbulence, uncertainty shocks, the drying up of liquidity, and insolvencies have the ability to spread rapidly affecting many institutions across the market. Lessons from recent financial crises spur the demand for financial regulations in order to mitigate firm behaviors consistent with increasing systemic risk.   

From a prudential perspective, safeguarding against this type of risk requires quantifying systemic risk. The existing literature offers many measures of such types of risk, these include: the expected shortfall measure of \cite{acharya2017measuring}; Co-Value-at-Risk (CoVaR) \citep{adrian2016covar}; and network connectedness measures \citep[see e.g.][]{demirer2018estimating}. The former measures relate to specific risk channels and as such can aid in calibrating regulatory tools. The latter approach permits one to quantify the overall influence of individual institutions to overall systemic risk, and hence identify systemically important financial institutions (SIFIs).  

Our approach relates closely with the above. However, under our framework one can characterise SIFIs, or important variables for different applications, throughout time as well as understanding whether the influence is persistent or transitory in nature. The benefit of this is twofold. First, enhancing prudential authorities' understanding of whether SIFIs influence are transient or long lasting can help refine policies in order to mitigate adverse firm behavior. For example, one could tailor policies by increasing the capital requirements of SIFIs for those who transmit persistent shocks that contribute significantly to systemic risk. Second, since systemic risk threatens the stability of the entire financial sector, knowing the frequency-specific sources of instability facilitates monitoring of changes to the such risks.

To illustrate how policy makers might use our approach, we examine the network structures of the financial sector at the granular level of 65 firms. Since our application uses daily data we have transitory and persistent network structures at every observation in our sample. We therefore focus on two different dates. Figure \ref{fig:app_network_day1} shows the network structures driven by transitory and persistent shocks for the SPF sector on October 24, 2008. This date corresponds to the start of the global financial crisis. Figure \ref{fig:app_network_day2} reports the corresponding network structures one year later on October 24, 2009. For each plot, arrows indicate the direction and strength of the connections, while a transparent (full colour) vertex indicates a stock that sends (receives) more shocks than it receives (sends). The size of the vertices indicates the net direction of the connections.

\begin{figure}[!ht]
\centering
\includegraphics[width=\textwidth]{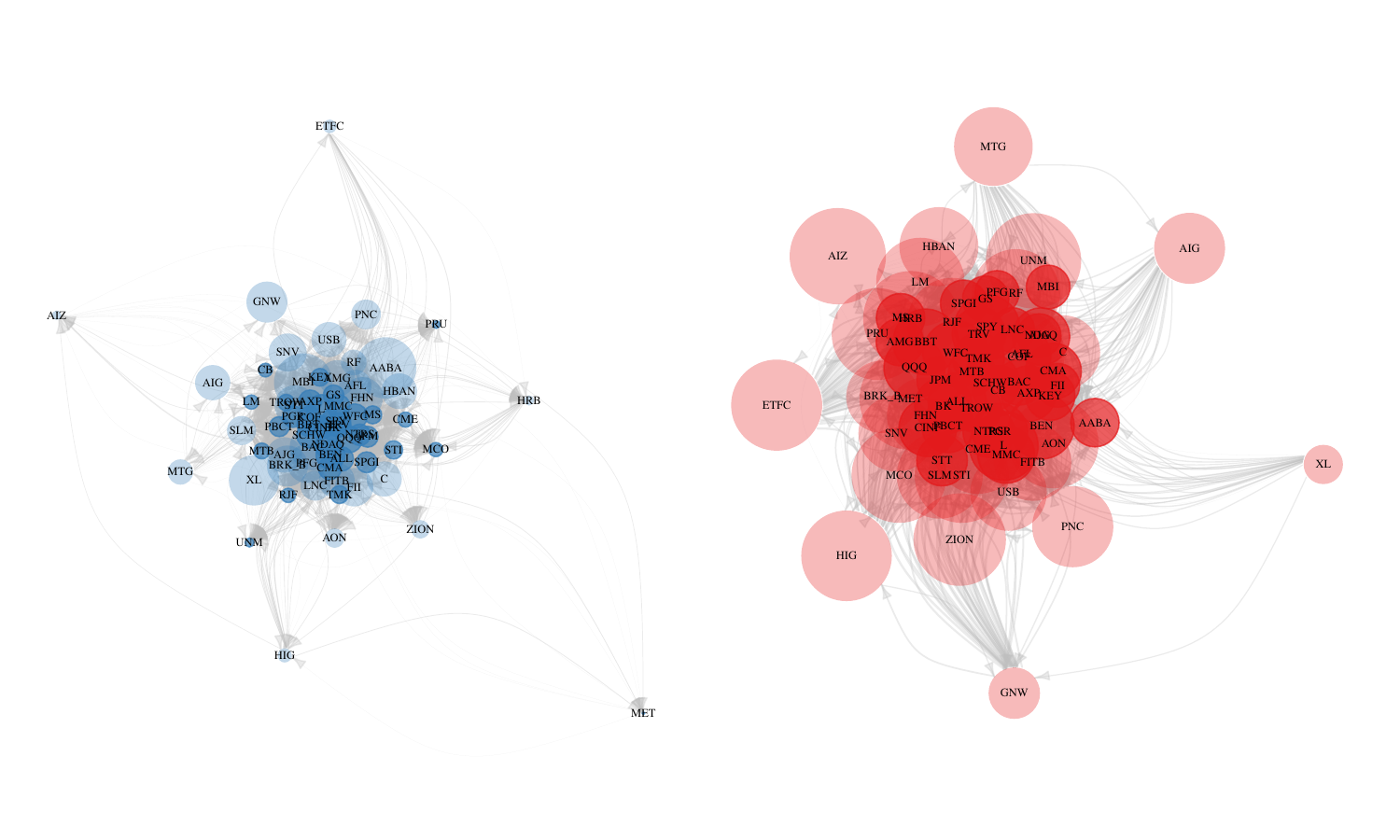}
    \caption{\textbf{Transitory and persistent networks of finance: 24 October 2008} \\ \small{The left (right) figure shows the network connections between the assets comprising the SPF sector driven by transitory (persistent) shocks on 24 October 2008, corresponding to the day when the VIX peaked. Arrows indicate the direction of the connections and the strength of the lines indicates the strength of the connections. Grey (black) vertices indicate firms that receive (send) more shocks than they send (receive). The size of the vertices indicates the net amount of shocks.}}
      \label{fig:app_network_day1}
\end{figure}

We can see that on 24 October 2008, the persistent links are larger relative to the transitory ones, suggesting that shocks within the financial sector create links that relate to the long-term. This suggests that systemic risk within the systems stems from persistent network structures. Now looking one year later, it is clear that connections are far weaker across both transitory and persistent network structures and systemic risk is relatively lower. The main takeaway from these plots is the strong differences in the overall structure of the horizon-specific networks. 

In Appendix D, we plot heatmaps showing the strength of financial institutions connections across persistent and transitory network structures for these same two dates in Figures D.2--D.3. Persistent shocks tend to drive the links with greater strength, and so we focus our discussion here. Zooming in to examine the contribution of specific firms, we can see that Truist Financial Corp (BBT), Franklin Resources (BEN), Loews Corporation (L), SPY, or Wells Fargo \& Co (WFC) transmitted persistent shocks to the financial sector and thus are identified as SIFIs that affected the system with persistent shocks. We can follow the contributions from the columns of the heatmap in Figure D.2. The impact of a bank increases with the number of rows containing a stronger and warmer red colour. As we can see, those banks we name above affect many other financial institutions at the start of the financial crisis, and such impacts are long-lasting. 

Conversely, Metlife (MET), Moody's (MCO), Unum (UNM), H\&R Block (HRB) and Assurant (AIZ) receiv the most shocks on 24 October 2008. One year later, we can see that the structure changes dramatically. While WFC, SPY and L seem to be strong SIFIs, BEN is a nearly non-contributing bank. This highlights how our approach tracks dynamics of key financial institutions within the system and their influence across persistent and transitory network structures. For completeness, we rank all institutions in the financial sector according to the strength of transitory and persistent shocks they transmit/receive during the same two dates in Table D.2 in Appendix D.

\begin{figure}[!ht]
\centering
\includegraphics[width=\textwidth]{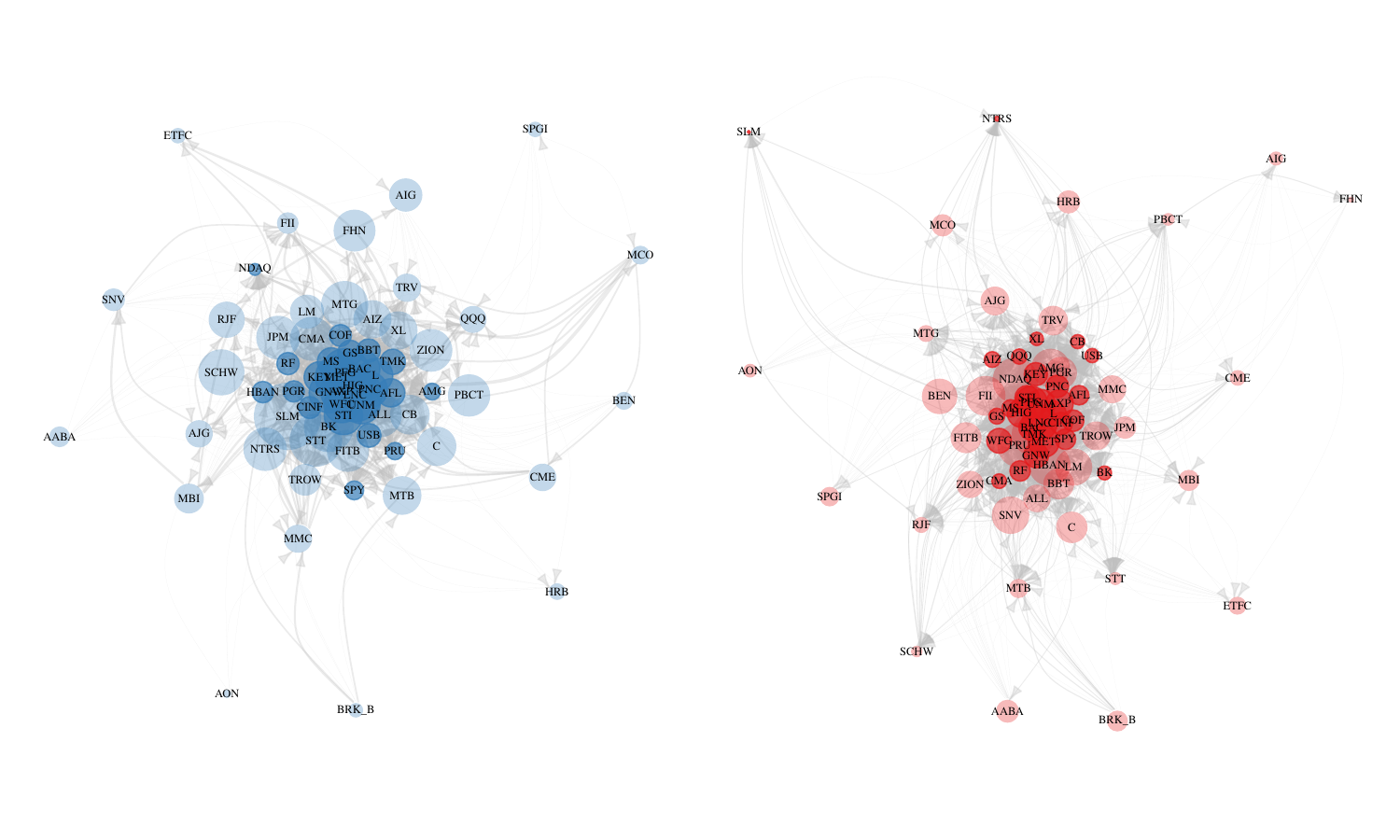}
    \caption{\textbf{Transitory and Persistent Network of Financials: 24 October 2009} \\ \small{The left (right) figure shows the network connections between the assets that make up the SPF sector by transitory (persistent) connectedness. Arrows indicate the direction of the connections and the strength of the lines indicates the strength of the connections. Grey (black) vertices denote firms that receive (send) more shocks than they send (receive). The size of the vertices indicates the net amount of shocks.}}
      \label{fig:app_network_day2}
\end{figure}

Overall, these network structures show how the role of a firm can change not only over time, but also in terms of persistence. Researchers may wish to use these network structures to assess the pricing implications of such risks. For example, users could assess the role of directional linkages in an empirical asset pricing application that builds on the theoretical work of \cite{branger2020equilibrium}. Our framework also allows one to track dynamic network structures that would complement studies such as \cite{herskovic2018networks} and \cite{gofman2020production}. The advantage of our approach is that one does not have to rely on monthly or annual data to capture such networks. Finally, looking at dynamic adjacency matrices can help economists understand how shocks dynamically determine network structures in models of monetary policy \citep[e.g.][]{pasten2020propagation} or the Phillips curve \citep[e.g.][]{rubbo2020networks}.

\section{Conclusion}\label{conclusion}
\label{sec:conclusion}

This paper proposes a novel framework for measuring dynamic network connections in a multivariate data system. We track dynamic connections driven by different degrees of persistence using a spectral decomposition of time-varying variance decomposition matrices. Our approach properly accounts for the characteristics of the shocks that create such links. We outline a procedure that allows one to test for statistical differences in connectedness over time and frequency. We provide Monte Carlo evidence that our measures are able to reliably track connectedness and correctly identify statistical differences from different data generating processes.

Empirically, we show that transitory and persistent measures of network connectedness improve our understanding of systemic risks arising from uncertainty networks. This is because our approach allows one to track connectedness across the transitory and persistent components of shocks. This is particularly useful during periods of heightened uncertainty, as our measures indicate whether systemic risks from network connections are transitory or persistent in nature. Ultimately, this could lead to better decision-making by macroprudential supervisors and investment decisions by market participants.
\\

\bibliographystyle{chicago}
\bibliography{bibliography}

\end{document}